\documentclass{article}
\usepackage{fullpage}
\usepackage{graphicx}
\usepackage{algorithm}
\usepackage{algpseudocode}
\usepackage{amssymb,amsmath,amsthm}
\usepackage{cleveref}
\usepackage{algpseudocode}
\usepackage{pdfpages}
\usepackage{placeins}
\usepackage{subcaption}
\captionsetup{compatibility=false}

\newtheorem{theorem}{Theorem}
\newtheorem{lemma}{Lemma}

\newcommand{\norm}[2][p]{\left\lVert#2\right\rVert_{#1}}
\newcommand{\vv}[1]{\overrightarrow{#1}}
\newcommand{\Epsilon}{E}
\DeclareMathOperator{\Length}{Len}

\DeclareMathOperator{\poly}{poly}
\DeclareMathOperator{\G}{G}
\DeclareMathOperator{\T}{T}
\newcommand{\Len}[1]{\Length(#1)}

\begin{document}

\title{Approximating The $p$-Mean Curve of Large Data-Sets}

\author{Sepideh Aghamolaei\footnote{Department of Computer Engineering,
        Sharif University of Technology, Tehran, Iran. aghamolaei@ce.sharif.edu} \and
Mohammad Ghodsi\footnote{Department of Computer Engineering,
        Sharif University of Technology, Tehran, Iran. 
        School of Computer Science, Institute for Research in Fundamental Sciences (IPM), Tehran, Iran. ghodsi@sharif.edu}
}
\date{}

\maketitle

\begin{abstract}
A set of piecewise linear functions, called polylines, $P_1,\ldots,P_L$ each with at most $n$ vertices can be simplified into a polyline $M$ with $k$ vertices, such that the Fr\'echet distances $\epsilon_1,\ldots,\epsilon_L$ to each of these polylines are minimized under the $L_p$ distance. We call $M$ for $L_p$ with $p\geq 1$ a $p$-mean curve ($p$-MC).

We discuss $p\geq 1$, for which $L_p$ distance satisfies the triangle inequality and $p$-mean has not been discussed before for most values $p$.
Computing the $p$-mean polyline is NP-hard for $L=\Omega(1)$ and some values of $p$, so we discuss approximation algorithms.

We give a $O(n^2\log k)$ time exact algorithm for $L=2$ and $p\geq 1$. Also, we reduce the Fr\'echet distance to the discrete Fr\'echet distance which adds a factor $2$ to both $k$ and $\epsilon$.
Then we use our exact algorithm to find a $3$-approximation for $L>2$ in $\poly(n,L)$ time. Our method is based on a generalization of the free-space diagram (FSD) for Fr\'echet distance and composable core-sets for approximate summaries.
\end{abstract}

\section{Introduction}
A {\em polygonal curve} is a sequence of points, e.g. GPS data such as vehicle tracks on a map, time series, movement patterns, or discretized borders of countries on a map. Trajectories appear in spatial databases and networks, geographic information systems (GIS), and any dataset with temporal labels for coordinates. Simplification is a method of reducing the size of the input trajectory, mostly to achieve reduced noise, optimize the storage space, or as a preprocessing step to improve the running time of later processing algorithms.

For large datasets and models for them, such as streaming, divide and conquer including massively parallel computations (MPC) \cite{beame2013communication}, MapReduce class (MRC) \cite{mrc}, and {\em composable core-sets} \cite{composable}, there are few algorithms with good theoretical guarantees. Even on medium-sized datasets, existing algorithms take at least quadratic time for some similarity measures, and are therefore too slow to be useful in practice. Methods for partitioning data while keeping the theoretical guarantees and relaxing the condition of the simplified curve be built from the points of the original curve are our main tools in achieving these goals.

The {\em min-k curve simplification} problem finds a subcurve with the same start and end vertices and the minimum number of vertices with distance at most $\epsilon$ from the original curve.
However, for a set of curves, the simplification errors are aggregated, if they are computed independently.
We focus on the simplification of a set of curves by finding a representative curve that is a good cluster representative for $L_p$-based clusterings and has almost k vertices, assuming that the input curves have Fr\'echet distances at most $\epsilon$ from each other.

In a simplification algorithm, a \textit{shortcut} is a segment that replaces a part of the curve starting and ending at vertices of that curve.

The {\em Fr\'echet distance} is the minimum length of the leash between a man walking on one curve from the start to the end, and his dog walking on the other curve from the start to the end, given that none of them ever goes back.
Deciding the Fr\'echet distance between two curves takes $O(n^2)$ time for curves with $O(n)$ vertices using the free space diagram (FSD)~\cite{fsd}.
The Fr\'echet distance cannot be decided in $O(n^{2-\epsilon})$ time~\cite{seth} or even approximated by a factor better than $3$~\cite{3aprx}, for any $\epsilon > 0$, unless SETH fails.
For $L$ input curves with $O(n)$ vertices, assuming SETH is true, it is not possible to decide the Fr\'echet distance of the curves in $O(n^{L-\epsilon})$ time, for all $\epsilon>0$~\cite{kcurves}.

\subsection{Previous results.}

Computing a representative curve is a well studied problem~\cite{median,mid2,mid3,mid4,mid5,mid6,klcenter,klmedian}.
For similar (close) monotone trajectories with the same start and end vertices, Buchin, et al~\cite{median} presented algorithms for computing the median trajectory and the homotopic median trajectory.

A curve simplification where the points of the simplified curve should be a subset of the vertices of the input curve is called a {\em discrete curve simplification}.
Discrete curve simplification under Fr\'echet distance is solvable in $O(n^2)$ time~\cite{imaiiri}, and no algorithm with $O(n^{2-\epsilon})$ running time exists for all $\epsilon>0$, if SETH holds~\cite{kcurves}.

In the \textit{global min-$\epsilon$ simplification}, $P'$ is a subsequence of $P$ with at most $k$ vertices that minimizes $d_F(P',P)$.
The current best exact min-$\epsilon$ simplification algorithms for global Fr\'echet distance have cubic complexity~\cite{cubic}.

A similar problem is $k$-segment mean curve~\cite{feldman}, where a monotone path is simplified into a possibly discontinuous $k$-piecewise linear function. Also, the problem of min-$k$ simplification with arbitrary points of the plane, where the distance is given and the goal is to minimize the number of vertices has been discussed in~\cite{van2019global}.

Approximation algorithms with near linear time exist for {\em local simplification} under Fr\'echet distance~\cite{2k,abam}, where only the error of each shortcut is taken into account.
{\em Global discrete curve simplification} using Fr\'echet distance can be solved in $O(n^3)$ time~\cite{cubic}. If $\forall \forall \exists -OV$ conjecture holds, there is no algorithm for global simplification using Fr\'echet distance with running time $O(n^{3-\epsilon})$, for any $\epsilon>0$~\cite{cubic}.

The combination of the representative curve and curve simplification problems is the $(k,l)$-clustering problem, where the cost of clustering a set of curves $\{P_i\}_{i=1}^L$ into $k$ clusters with centers $\{C_i\}_{i=1}^k$, such that $$\mathbf{\sqrt[p]{\sum_{j=1}^n \min_i d(C_i,P_j)^p}}$$ is minimized using curves with complexity $\ell$ as cluster representatives (centers).

Driemel et al.~\cite{driemel} proposed $(1+\epsilon)$-approximation algorithms for $k$-center and $k$-median clustering of curves in 1D and a $2$-approximation for any dimensions, assuming the complexity of a center and $k$ is constant.
Buchin et al.~\cite{klcenter} presented an algorithm for computing the $k$-center of a set of curves under Fr\'echet distance, such that the complexity of the representative curves (centers) is fixed, and prove that it is NP-hard to find a polynomial approximation scheme (PTAS) for this problem. They also presented a $3$-approximation algorithm for this problem in the plane and a $6$-approximation for $d\geq 2$, and proved the lower bound $2.598$ for the discrete Fr\'echet distance in 2D if $P\neq NP$.

$p$-mean trajectories for $p\rightarrow\infty$ using $k$-center~\cite{klcenter,driemel}, and $p=1$ using $k$-median~\cite{driemel} exist.
The computation of $p$-MC based on the Fr\'echet reparameterization has already been discussed and implemented for $p=1,p=2$~\cite{buchin2019klcluster}, however, such a computation can have complexity $O(n^L)$, which is infeasible for large datasets. For $p=1$, the problem is W[1]-hard using $L$ as the parameter~\cite{buchin2020complexity}.

\subsection{Our results.}

We call the objective function of $p$-MC the \textit{$L_p$-norm of the Fr\'echet distance}.
Since the root function is monotone for $p\geq 1$, which are the values that appear in the cost of $L_p$-based clustering problems, it is sufficient to minimize the $p$-th power of Fr\'echet distance or $d_F(.,.)^p$.
Note that while both $L_p$-norms and the Fr\'echet distance satisfy the triangle inequality, their combination does not. For example, for $p\rightarrow \infty$, the inequality $\sqrt[p]{d(a,c)^p}\leq \sqrt[p]{d(a,b)^p+d(b,c)^p}$ becomes $$d(a,c) \leq \max (d(a,b),d(b,c)),$$ which does not always hold.

Given the re-parameterizations of the input curves that gives the optimal $p$-MC, the problem of finding the $p$-MC curve can be solved by reducing it to the point version of the problem, where a set of points is given and the goal is to find a point that minimizes the $\ell_p$-norm of distances from itself to the rest of the points. However, since the Fr\'echet distance only cares about the maximum distance between the points, only the points whose matching gives the maximum distance need to be considered. These are the Fr\'echet events.

Based on this observation, we give the following new results, and define new concepts that explain some of the reasons behind the good performances of existing algorithms and heuristics:
\begin{itemize}
\item We consider the $p$-MC for most values of $p$ and give approximation algorithms for them.
\Cref{table:results} summarizes the results on $p$-mean curves of $L$ curves.
\item We give a divide and conquer algorithm for computing the representative curve. The parallel implementation of our algorithm has time complexity independent from $L$.
\item We give a new simplification algorithm which can simplify an input curve with $O(\epsilon)$ error and $O(k)$ vertices, where $k$ is the length of the optimal simplification. It is based on a reduction to the discrete case.
\end{itemize}
\begin{table*}[ht]
\centering
\begin{tabular}{|p{2.6cm}|c|c|c|c|}
\hline
$p$-Mean & Time & $\epsilon$ & $k$ & Reference\\
\hline
Continuous $p$-Mean:&&&&\\
$p\rightarrow \infty$ & $\poly(n,k,L)$ & $\geq 2.25$ & $k$ & Lower bound~\cite{klcenter}\\
$p=1$ & $O(\poly(n,L))$ & $> 1$ & $n$ & Lower bound~\cite{klmedian}\\
$p\geq 1$ & $O(Ln^{5L}\log n)$ & $3$ & $k$ & \Cref{alg:continuous}\\
$p\geq 1$ & $O(L^2n^{2}\log n+\G(L,k))$ & $2$ & $2k$ & \Cref{alg:simp}\\
$p\geq 1$ & $O(L\T(n)+\G(L,k)))^{\dagger}$ & $2\alpha+1$ & $k$ & \Cref{alg:sensitive}\\
\hline
Discrete $p$-Mean:&&&&\\
$p\rightarrow \infty$ & $O(kLn\log n +n^2)$ & $6$ & $k$ &~\cite{klcenter}\\
$p\geq 1$ & $O(L^2n^2\log n+\G(L,k))^{\dagger}$ & $3$ & $k$ & \Cref{alg:pairwise}\\
\hline
\end{tabular}
\caption{Results on $p$-mean curve in $\mathbb{R}^2$.
$\T(n)$ is the complexity of an $\alpha$-approximation simplification algorithm and $\G(L,n)$ is the complexity of a $p$-MC algorithm with $k$ vertices for $L$ curves with $n$ vertices.
\\
The results marked with $^\dagger$ is for one recursion, they can be run on larger inputs by recursively calling themselves at the cost of increasing the approximation factor.}
\label{table:results}
\end{table*}
\section{Preliminaries}
A polygonal curve $P$ is a sequence of points $\{P[i]\}_{i=1}^n$ and the segments connecting each point $P[i]$ to its next point in the sequence, $P[i+1]$, for $i=1,\ldots,n-1$.

The \textit{Fr\'echet distance} of two curves $P,Q:[0,1]\rightarrow \mathbb{R}^2$ is defined as
\[
d_F(P,Q)=\inf_{\alpha,\beta} \max_{t\in[0,1]} d(P(\alpha(t)),Q(\beta(t))),
\]
where $\alpha$ and $\beta$ are reparameterizations, i.e., continuous, non-decreasing, bijections from [0,1] to [0,1], and $d(.,.)$ is a point metric.

In the Fr\'echet distance of a set of $L$ curves, $d$ is the diameter of the mapped points from the $L$ input curves and can be computed in $O(Ln^2 \log n)$ time~\cite{setofcurves}.
The Fr\'echet distance of $L$ curves is the diameter of their minimum enclosing ball or the $1$-center of the curves using Fr\'echet distance.
Using triangle inequality, the Fr\'echet distance of the curves is at most twice the distance from $1$-center to the farthest curve.

The \textit{free-space diagram} (FSD)~\cite{fsd} between two polygonal curves $P:[0,1]\rightarrow \mathbb{R}^2,Q:[0,1]\rightarrow \mathbb{R}^2$ for a constant error $\epsilon>0$, is a 2D region in the joint parameter space of those curves where each dimension is an arc-length parameterization of one of the curves, and the free space~(FS) is the set of all points that are within distance $\epsilon$ of each other:
$
D_{\epsilon}(P,Q)=\{(\alpha,\beta)\in[0,1]^2 \mid d(P(\alpha),Q(\beta))\leq \epsilon\},
$
and the rest of the points are non-free.
Therefore, each point of FSD defines a mapping/correspondence between a point on $P$ and a point on $Q$.
The Fr\'echet distance between two curves is at most $\epsilon$ iff there is an $\alpha\beta$-monotone path in the free space diagram from $(0,0)$ to $(1,1)$.
In figures, the free space is usually shown in white, and the non-free regions are shown in gray. The orthogonal lines drawn from the vertices of the input curves build a grid (FSD grid), whose cells are called the FSD cells.

A special transformed FSD called the deformed FSD was already defined for a variation of the Fr\'echet distance called the backward Fr\'echet distance, assuming the edges of input curves have weights~\cite{gheibi2019weighted}.

A curve $P$ is \textit{$c$-packed}~\cite{cpacked} if the total arc length of $P$ inside any ball of radius $r$ is at most $c\cdot r$.
The time complexity of computing a $(1+\epsilon)$-approximation of the Fr\'echet distance between $c$-packed curves is $O(\frac{cn}{\epsilon}+cn\log n)$~\cite{cpacked}.
$c$-Packed curves also have the property that for a given $\epsilon>0$, the complexity of the RS is within a constant factor of the complexity of the RS for $\alpha\cdot \epsilon$, for any $\alpha>0$.
The value $c$ of a $c$-packed curve can be approximated within factor $2+\epsilon$, for any $\epsilon>0$ in $O(\frac{\tau}{a})$ time, where $\tau$ is the length of the curve and $a$ is the distance between the closest points on the curve~\cite{aghamolaei2020windowing}.

$L_p$-based clustering problems are clusterings with cost equal to the $L_p$-norm of the distances between the points and their corresponding centers.
For a real number $p\geq 1$, the \textit{$L_p$ cost} of a set of points $T_i=(x_i,y_i)\in \mathbb{R}^2$, $i=1,\cdots,L$ is defined as
$$
\min_{T'\in \mathbb{R}^2} \sqrt[p]{\sum_{i=1}^L |T'-T_i|^p}.
$$
$T'$ is also the cluster center of $\{T_i\}_{i=1}^L$ in an $L_p$-based clustering.
There is a $(1+\epsilon)$-approximation algorithm using linear programming for computing the $L_p$ cost for fixed $p$~\cite{estimator}. For special cases such as $p=1,p=2$ and $p\rightarrow \infty$ explicit mathematical formulas for $x$ exist which can be used to compute $x$ in linear time.
Constant factor approximations for $L_p$-based clustering also exist~\cite{mappingcoreset}.

Given a curve $Q$ and a set of curves $\{P_i\}_{i=1}^L$, the \textit{$L_p$-norm of the Fr\'echet distances} is defined as
$$
\sqrt[p]{\sum_{i=1}^L d_F(Q,P_i)^p}.
$$
The $L_p$-norm of the Fr\'echet distances may not satisfy the triangle inequality for different curves $Q$.

Based on this definition, given the optimal mapping between the points of the curves, it is possible to compute the corresponding curve by finding the center of the mapped points for every pair of points from the curves by solving the $L_p$ cost optimization.

Finding the minimum-link (min-link) path in a polygonal domain asks for finding a minimum-link s-t path (a path from $s$ to $t$) such that the number of bends is minimized and the path lies inside the polygon and those not go through a set of polygonal holes. This problem can be solved in $O(E_{VG}\alpha^2(n)\log n)$ time, where $E_{VG}$ is the number of edges in the visibility graph of the polygon~\cite{mitchell1992minimum,toth2017handbook}.
\section{$p$-MC of two polylines}

\subsection{A certificate for $k$-simplification in FSD}
\paragraph{Certificate for Fr\'echet distance}
Given two curves $P,Q$ and a constant $\epsilon>0$, consider a set of certificates $C_{(i,j,p,q,p',q')}$ that indicate whether there is a $P$-monotone $Q$-monotone mapping of cost at most $\epsilon$ between the points of the $i$-th edge of the first curve $P$ with the $j$-th edge of the second curve $Q$, where $p$ is mapped to $q$ at the beginning of the mapping and $p'$ is mapped to $q'$ at the end of this mapping. Then, the Fr\'echet distance of $P$ and $Q$ is at most $\epsilon$ if a sequence of certificates $\pi_1,\ldots,\pi_k$ exists where the first points $(p,q)$ of the first certificate $\pi_1$ are the start vertices of the curves, namely $P_1,Q_1$, and the end vertices of the last certificate $\pi_k$ are the last vertices of the curves, namely $P_{|P|},Q_{|Q|}$.
Formally,
\begin{align*}
F_{(\epsilon,P,Q)}=\{&\exists \{\pi_i\}_{i=1}^k\in {[|P|]\times [|Q|] \times (P\times Q)^2}:\\
&q(\pi_i)=p(\pi_{i+1}), q'(\pi_i)=p'(\pi_{i+1}), \prod_{i=1}^k C_{\pi_i}=\text{true}\},
\end{align*}
where $[n]$ denotes the set $\{1,\ldots,n\}$, $f(t)$ denotes the member $f$ of the tuple $t$, and $C_{\pi_i}$ is the indicator variable for the Fr\'echet distance of two line segments.

\paragraph{Certificate for path of length $k$}
We define a certificate for a path between two points $p$ and $q$ of length $k$ with restriction on the feasibility of edges can be defined
by the recurrence relation
\begin{align*}
F(p,q,k)&=\vee_{t\in \mathbb{R}^2} (F(p,r,i)\wedge F(r,q,k-i)),
\\
F(p,q,1)&=\begin{cases}
\text{true} & \text{if there is a feasible edge $(p,q)$}\\
\text{false} & \text{otherwise}\\
\end{cases}.
\end{align*}

An example of this problem is the shortest path in a polygon with holes, where the feasibility constraint for the validity of a segment is that it does not intersect a hole.

\paragraph{Certificate for $k$-simplification under Fr\'echet distance}
Similarly, a certificate can be defined for the existence of a path of length $k$ and Fr\'echet distance at most $\epsilon$. Let $F_k$ be the certificate of existence of a path of length $k$. Then, the certificate for a simplification of length $k$ is given by $C_{\pi_i}\wedge F(p_1,p_n,k)$.
In \Cref{sec:nfsd}, we discuss how to build a diagram and define the certificate for $k$-simplification using the Fr\'echet distance on it.

Some previous work~\cite{cubic,van2019global} assume there is one interval on each edge and consider the first point of the interval~\cite{cubic}, or the whole interval~\cite{van2019global}. As shown in \Cref{fig:example}, this is not always the case.
\begin{figure}[h]
\centering
\includegraphics[scale=0.8]{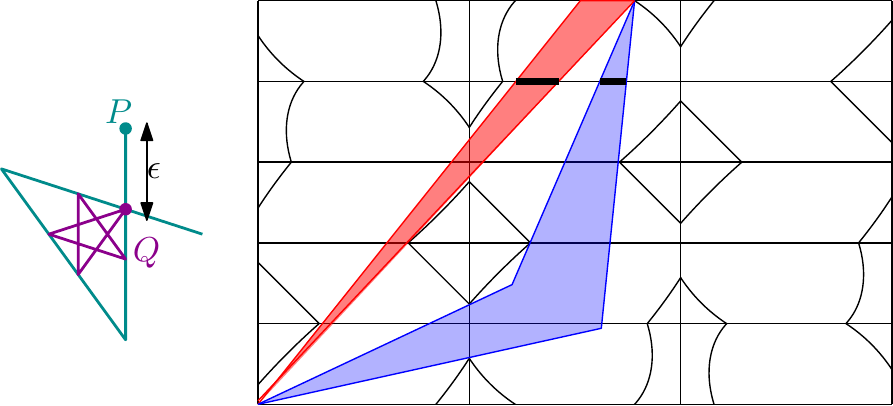}
\caption{Two intervals (bold lines) on the same edge for valid reparameterizations of the curves $P$ and $Q$ for $2$-simplification under Fr\'echet distance.}
\label{fig:example}
\end{figure}
\subsection{Normalized free-space diagram}\label{sec:nfsd}
Scaling the axes of the free-space diagram by a constant has already been discussed~\cite{gheibi2019weighted}.
In \Cref{lemma:transform} we show a more general transformation works.
\begin{lemma}\label{lemma:transform}
For a set of transformations $f_i:[0,1]\rightarrow [0,1]$, the free-space diagram with the free-space $\{(t_1,\ldots,t_L)\mid d(P_1(f_1(t_1)),P_2(f_2(t_2)),\ldots,P_L(f_L(t_L))) \leq \epsilon\}$ has the same set of feasible reparameterizations, if $f_i$ is a one-to-one non-decreasing function.
\end{lemma}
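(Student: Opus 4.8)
The plan is to read the transformation as a change of coordinates on the joint parameter space $[0,1]^L$ and to build an explicit bijection between the feasible reparameterizations of the two diagrams. Let $\Phi\colon[0,1]^L\to[0,1]^L$ be the coordinate-wise map $\Phi(t_1,\ldots,t_L)=(f_1(t_1),\ldots,f_L(t_L))$. The defining inequality of the transformed free space is exactly the defining inequality of the original free space evaluated at $\Phi(t_1,\ldots,t_L)$, so $(t_1,\ldots,t_L)$ lies in the transformed free space if and only if $\Phi(t_1,\ldots,t_L)$ lies in the original free space. A feasible reparameterization is, by definition, a monotone (each coordinate non-decreasing) path through the free space running from the corner $(0,\ldots,0)$ to the corner $(1,\ldots,1)$, and I will show $\Phi$ maps such paths of the transformed diagram bijectively onto such paths of the original one.

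First I would record the elementary properties of each $f_i$. A one-to-one non-decreasing function is strictly increasing, hence injective with a well-defined inverse on its image; and a strictly increasing function of $[0,1]$ onto $[0,1]$ necessarily fixes the endpoints, $f_i(0)=0$ and $f_i(1)=1$, so $\Phi$ fixes the two distinguished corners. For the forward direction, take a feasible reparameterization $t\mapsto(\delta_1(t),\ldots,\delta_L(t))$ of the transformed diagram and set $\gamma_i=f_i\circ\delta_i$. Each $\gamma_i$ is non-decreasing as a composition of non-decreasing functions, the path lies in the original free space by the substitution identity above, and it joins the two corners because $\Phi$ fixes them; hence $(\gamma_1,\ldots,\gamma_L)$ is a feasible reparameterization of the original diagram, and it realizes exactly the same matching of curve points $P_i(\gamma_i(t))=P_i(f_i(\delta_i(t)))$. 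Applying $\Phi^{-1}=(f_1^{-1},\ldots,f_L^{-1})$ coordinate-wise gives the inverse assignment, so the two directions are mutually inverse and the sets of feasible reparameterizations coincide.

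I expect the only real obstacle to be the well-definedness of the backward map: to set $\delta_i=f_i^{-1}\circ\gamma_i$ I need every value taken by $\gamma_i$ to lie in the image of $f_i$, i.e.\ $f_i$ must be onto $[0,1]$. Strict monotonicity alone does not guarantee this---a strictly increasing step function can jump over an interval of parameter values, and then the transformed diagram could only realize reparameterizations that avoid that interval, shrinking the feasible set. The clean way to close this gap is to use that each $f_i$ is, as a reparameterization, a continuous strictly increasing bijection of $[0,1]$, equivalently surjective with $f_i(0)=0$ and $f_i(1)=1$; under this reading $f_i^{-1}$ is itself a continuous non-decreasing bijection, the composition $\gamma_i=f_i\circ\delta_i$ is again a genuine reparameterization, and the correspondence is an honest bijection. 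So the substance of the proof is the pair of monotonicity-preservation arguments, and the care needed is in pinning down the hypotheses on $f_i$ so that $\Phi$ is invertible on the whole parameter space rather than merely injective.
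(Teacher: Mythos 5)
Your proof is correct and follows essentially the same route as the paper's: both rest on the substitution $t'_i=f_i(t_i)$, which identifies the transformed free space with the original one, followed by coordinate-wise inversion. The difference is one of rigor, and it is to your credit. The paper's proof simply writes $t_i=f_i^{-1}(t'_i)$ and asserts that the result ``is still a reparameterization''; this silently assumes each $f_i$ is onto $[0,1]$, so that (being also injective and non-decreasing) it is a continuous, strictly increasing bijection fixing $0$ and $1$. You correctly observe that the stated hypothesis---one-to-one and non-decreasing---does not give this: for instance $f_i(t)=t/2$, or an injective increasing map with a jump, satisfies the hypothesis, yet then $f_i^{-1}\circ\gamma_i$ is undefined for some feasible reparameterizations $\gamma_i$ of the original diagram, $f_i\circ\delta_i$ need not be continuous or reach the corner $(1,\ldots,1)$, and the two feasible sets genuinely differ. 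Your repair---reading each $f_i$ as a continuous strictly increasing bijection of $[0,1]$, i.e.\ itself a reparameterization---is exactly the condition under which the lemma is true, and it is what the paper actually uses later (the per-segment rescalings in \Cref{lemma:nfsd} and the $\overrightarrow{\epsilon}$-NFSD construction are such bijections). So: same approach, but your version closes a real gap in both the lemma's hypotheses and the paper's own proof.
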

\begin{proof}
Substituting $t'_i=f_i(t_i)$ gives:
\[
(P_1(f_1(t_1)),P_2(f_2(t_2)),\ldots,P_L(f_L(t_L)))=(P_1(t'_1),P_2(t'_2),\ldots,P_L(t'_L)).
\]
The function is one-to-one, so $t_i=f^{-1}_i(t'_i)$. 
Since $t'_i:[0,1]\rightarrow[0,1]$ and its non-decreasing, it is still a reparameterization of $P_i$, and the set of reparameterizations remains the same.
\end{proof}

We introduce a scaled FSD called {\em normalized FSD} which changes the representation of the curves in FSD such that a path in FSD corresponds to a segment in the original space if the derivatives of any point on the curve with respect to each of the FSD axes (input curves) is the same.
We formalize this in \Cref{lemma:nfsd}, where the length of each segment from the input curves is divided by $\sqrt{m_{P_i}^2+1}$, and $m_{P_i}$ is the slope of the segment.
To handle negative slopes as well, we add at most $n$ points on each of the edges of FSD cells that correspond to the intersection of the extensions of the shortcuts through previous vertices with the corresponding segment of that FSD edge in the Euclidean plane.
This is formally explained in \Cref{lemma:negative}.
In the rest of the paper, when we use FSD, we mean the normalized free-space diagram.
\begin{lemma}\label{lemma:nfsd}
A segment in the normalized FSD corresponds to a segment in the original space (Euclidean plane), if the slopes of the edges of each input curve have the same sign, i.e. all positive slopes or all negative.
\end{lemma}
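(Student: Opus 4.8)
The plan is to make the normalization explicit and then reduce the statement to the fact that a composition of affine maps is affine. First I would compute what dividing by $\sqrt{m_{P_i}^2+1}$ does: a segment of slope $m$ and horizontal extent $\Delta x$ has arc length $|\Delta x|\sqrt{1+m^2}$, so after the rescaling its normalized length is exactly $|\Delta x|$, the projection onto the $x$-axis. Thus the normalized coordinate of a point of $P_i$ along its FSD axis is the abscissa of that point, and on the active edge the point at normalized coordinate $\xi$ is $P_i(\xi)=(\xi,\,c+m(\xi-x_0))$, an affine function of $\xi$. The normalization is chosen precisely so that the abscissa advances at unit rate along every axis, which is the meaning of the phrase that the derivative of each matched point with respect to its own FSD axis is the same. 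This map is one-to-one and non-decreasing as soon as each curve is a monotone graph over $x$, which is where the slope hypothesis is used; granting it, \Cref{lemma:transform} applies verbatim and the set of feasible reparameterizations is unchanged, so it is legitimate to reason entirely in the normalized diagram.

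Next I would carry out the core computation one cell at a time. Within a single FSD cell --- one edge of $P$ of slope $m_P$ against one edge of $Q$ of slope $m_Q$ --- both matched points are affine in the normalized coordinates, namely $P(\xi)$ and $Q(\eta)$ as above. A segment of the normalized FSD is an affine map $t\mapsto(\xi(t),\eta(t))$, so $t\mapsto P(\xi(t))$ and $t\mapsto Q(\eta(t))$ are affine and each matched point sweeps a straight sub-segment at constant velocity. The point of the original space assigned to this matching is the representative ($p$-mean) of the matched points; for two polylines this is simply their midpoint, since for every $p\ge 1$ the map $T\mapsto|T-A|^p+|T-B|^p$ is convex and invariant under reflection through the midpoint of $A$ and $B$, so its minimizer is that midpoint (for $p=1$ one may take the midpoint of the minimizing segment). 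Being an affine combination of two affine functions of $t$, this representative point is affine in $t$ and therefore traces a straight segment, which is the assertion.

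The step I expect to be the real obstacle is the transition across cell boundaries and, relatedly, the precise role of the slope hypothesis. A straight FSD segment that crosses a vertex of an input curve passes to a new edge-slope, so the matched trajectory --- and with it the representative point --- bends there and is only piecewise linear; I would dispatch this by reading the lemma per cell, i.e. taking \emph{segment} to mean a maximal piece of a monotone path inside one (convex) cell, and by appealing to the refinement inserted just before the lemma, where breakpoints are added at the intersections of the shortcut extensions with the FSD edges so that every relevant straight FSD segment lies in a single constant-slope cell. The sign hypothesis, together with that subdivision, is what reduces matters to the case where each curve is a monotone graph over the $x$-axis, so that the abscissa is single-valued and the induced matching is monotone; slopes of mixed sign fold the axis and make the abscissa multivalued, which is exactly the situation deferred to \Cref{lemma:negative}.
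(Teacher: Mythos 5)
Your normalization computation coincides with the paper's: dividing each edge by $\sqrt{m_{P_i}^2+1}$ makes the NFSD coordinate advance at unit rate with the abscissa, which is exactly the chain-rule identity the paper derives ($\frac{dR'}{d\ell_P}=\frac{dR'}{dx}\cdot\frac{1}{\sqrt{m_P^2+1}}$, so after scaling the slope relative to the axis equals the Euclidean slope), and your within-cell argument (matched points are affine in the normalized coordinates, hence the $p$-mean representative --- the midpoint, consistent with \Cref{lemma:mapping} --- is affine in the path parameter) is correct as far as it goes. The genuine gap is where you stop: you reinterpret ``segment'' to mean a maximal straight piece \emph{inside one cell}, precisely because you observed that the representative bends when a straight NFSD segment crosses a cell boundary. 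That retreat removes the actual content of \Cref{lemma:nfsd}. The lemma is invoked in \Cref{lemma:negative} and \Cref{theorem:pmc} for edges of the graph $H$ --- shortcuts, single links of the output curve --- each of which is matched to \emph{many} edges of the input curves, i.e.\ whose image in the NFSD crosses many cells; if every output segment were confined to one cell, the ``simplification'' would have at least as many vertices as the input and the link-counting argument would collapse. Moreover, restricted to a single cell your statement is true but vacuous: each curve contributes only one slope per cell, so the same-sign hypothesis says nothing, and within a cell arc length differs from abscissa by a single constant factor, so the un-normalized FSD has the identical property. A proof in which neither the normalization nor the slope-sign hypothesis does any work cannot be a proof of this lemma.

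The cross-cell consistency is the part the paper actually argues, however tersely: because the normalization makes the axis-to-plane slope conversion the \emph{same constant} on every cell (this is where equal signs of the slopes are used, together with the extra vertices introduced for \Cref{lemma:negative}), a constant-slope path in the NFSD corresponds to a constant-slope, i.e.\ straight, curve in the plane even as the matching sweeps across different edges of $P$ and $Q$. Your proposal has no substitute for this step. Your patch for it --- that the inserted breakpoints (the sets $S_{i,j}$ of shortcut-extension intersections) guarantee ``every relevant straight FSD segment lies in a single constant-slope cell'' --- is factually wrong: those points lie on cell boundaries and serve as permitted bend or reflection points, and the edges of $H$ between them still span many cells. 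Finally, note what your own bending observation shows: under the correspondence you chose (the pointwise midpoint of the two matched input points), the cross-cell statement is genuinely false, whereas the paper's $R$ is a curve matched to both inputs whose slope is tracked through the normalized axes, not the pointwise midpoint of the matching. Conflating these two correspondences is what forced the per-cell retreat; identifying and justifying the correct correspondence across cell boundaries is the missing idea.
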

\begin{proof}
Assume $P$ and $Q$ are two input curves, and we want to find a condition on a curve $R$ in the parameter space of $P,Q$ that guarantees it will correspond to a polygonal curve in the Euclidean plane.

Choose an arbitrary segment from each of these curves. We want to change the mapping of the points of $P$ and $Q$ to the axes of the FSD to keep the slope of $R$ constant along different segments.
Let $\ell_P$ be the length of the curve from its start vertex to the point where the length of the curve $P$ reaches $\ell_P$. So, the domain of $\ell_P$ is $[0,\Len{P}]$, where $\Len{P}=\int_{0}^1 P(t) dt$ is the length of curve $P$. Similarly, define $\ell_Q$ and $\ell_R$.
\Cref{fig:units} shows unit vectors in the direction of $\ell_P,\ell_Q,\ell_R$ for a segment of $P$, respectively.

\begin{figure}[h]
\centering
\includegraphics[scale=0.5]{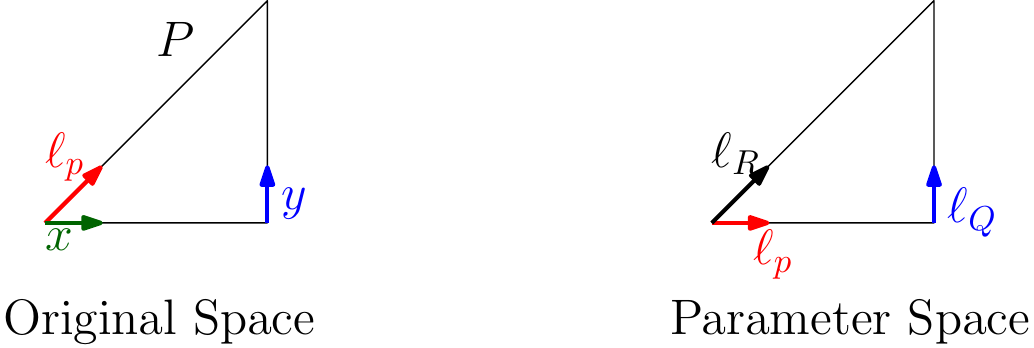}
\caption{Unit vectors in the Euclidean plane (original space) on the left, and in the FSD (parameter space) on the right.}\label{fig:units}
\end{figure}

For each segment of the curves, we define a reparameterization.
Let $P'(x)=m_P \cdot x+p_1,$ where $x\in [x_1,x_2]$, be a segment of curve $P$ with slope $m_P$ and $y$-intercept $p_1$.
The reformulation of $P'$ in terms of $\ell_P$ is given in the following formula:
$
P'(\ell_P)=p_1+\frac{m_P}{\sqrt{m_P^2+1}}\ell_P, \text{ where } \ell_P\in[0,\Len{P}],
$
since using the derivatives of length variables:
$
d \ell_P = \sqrt{(d x)^2+(d y)^2}, \quad \frac{d P'}{d x} = m_P \Leftrightarrow
$
\\
$
d \ell_P = \sqrt{(d x)^2+m_P^2 (d x)^2} = \sqrt{1+m_P^2} d x\Leftrightarrow
d P'(\ell_P) = \frac{d P'(x)}{d x} \frac{d x}{d \ell_P} = \frac{m_P}{\sqrt{m_P^2+1}}\ell_P.
$
Similarly, we reparameterize a segment $Q'$ of curve $Q$ in terms of its length variable $\ell_Q$.
The axes of the normalized FSD are $P$ and $Q$. So we need to compute the slope of the line segment $R'$ from curve $R$ in terms of $\ell_P$ and $\ell_Q$:
$
\frac{dR'}{d \ell_P} = \frac{dR'}{dx}\times \frac{dx}{d\ell_P} = \frac{dR'}{dx} \frac{1}{\sqrt{m_P^2+1}},
$
and the equation for $\ell_Q$ is similar.
This means that scaling each segment of the curve $P$ by a factor $\frac{1}{\sqrt{m_P^2+1}}$ preserves the slope of $R'$ in terms of $\ell'_P$ in the Euclidean plane.

Based on \Cref{lemma:transform}, if a transformation converts the ellipse that is the free-space inside a cell into a degenerate ellipse, it does not work anymore. We show how to handle these cases that they still preserves the properties. If the original free-space is a degenerate case, the transformation only changes the slope of the lines.

After scaling parts of the axes of the FSD, the slope of segment $R$ will not change if the slopes of the segments of the input curves have the same sign. If this is not the case, use another segment from one of the curves to compute the reparameterization, and then map the points accordingly. This is possible if at least one of the edges of one of the curves has a different slope; otherwise all the points on the each of the curves are collinear.
In that case, the original free-space is a degenerate ellipse, i.e. linear functions, for which the scaling by a constant factor as described in this lemma works.

\end{proof}
Without the transformation described in \Cref{lemma:nfsd}, a polyline in the FSD still represents a polyline in the Euclidean space, however, the number of vertices can be different.

To add the signs of the slopes of the shortcut segments, it is enough to add them to the boundaries of the cells, and instead of computing the minimum-link path, compute the unweighted shortest path with these vertices in addition to the intersections of the free-space with the FSD grid (cell boundaries). The edges have weight $0$ when the slope of the last segment is equal to the slope of the next segment, otherwise the edges have weight $1$. Because only non-decreasing paths from the start to the end vertex correspond to valid reparameterizations, direct the edges in the order of increasing index. By computing the shortest path in the resulting DAG, the minimum complexity polyline in NFSD is computed.

Let $S_{i,j}(P,Q)$ be the set of intersection points between the shortcuts $\overline{p_1p_t}$ for $t=1,\ldots,n$ and the segment $\overline{q_jq_{j+1}}$, for two polylines $P=(p_1,\ldots,p_n)$ and $Q=(q_1,\ldots,q_n)$. Build the graph $H=(V,E)$ with vertices $V=V_G \cup \cup_{i,j} S_{i,j}(P,Q)$, where $V_G$ is the set of vertices in FSD, i.e. the intersection points of the free-space with the grid lines of FSD, which are the points that map a vertex of one curve to each point on the other curve. The edges $E$ connect the vertices with a segment with non-negative slope between them that lies completely in the free-space, with an edge of weight $1$ in the graph.

Based on the definition, three cases for the simplification can be solved using $H$, depending on the subset we choose the vertices of the simplification from:
\begin{itemize}
\item Vertices of the input curves: use the vertex for the grid line containing the edge of the cell containing that point. Remember that each grid line is a vertex of an input polyline in all FSDs.
\item Vertices of one input curve (curve $P$): The shortest path must be computed using only the vertices of $V_G$ and the edges between the vertices of $V$ or with a subset of $\cup_{i,j} S_{i,j}$ only if the incoming and outgoing edges have the same slope after mapping to the Euclidean plane.
\item Any point of the Euclidean plane: It is enough to map the vertices of $V$ to the Euclidean plane.
\end{itemize}
\Cref{lemma:negative} shows the simplification can be computed using the shortest path on $H$, or a subgraph of it.
\begin{lemma}\label{lemma:negative}
The shortest path in $H$ from the start vertex to the end vertex gives the minimum complexity path in the Euclidean plane.
\end{lemma}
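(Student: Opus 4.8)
The plan is to establish a bijective, weight-preserving correspondence between paths in the graph $H$ and valid polylines in the Euclidean plane, so that the shortest (minimum-weight) path in $H$ yields the minimum-complexity polyline. The key is that $H$ was constructed to contain exactly the vertices needed to represent every feasible reparameterization: the free-space/grid intersections $V_G$ that pin down where a vertex of one input curve meets the other, together with the shortcut-intersection points $\bigcup_{i,j} S_{i,j}(P,Q)$ that mark where a slope change in the simplified curve can occur without violating monotonicity. First I would argue completeness, namely that every minimum-complexity valid polyline $M$ in the Euclidean plane induces a path in $H$: lifting $M$ into the normalized FSD produces an $\alpha\beta$-monotone curve in the free space, and by \Cref{lemma:nfsd} each straight segment of $M$ (slopes of a fixed sign) corresponds to a straight segment in the NFSD lying entirely in the free space with non-negative slope. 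The vertices of this lifted curve land on grid lines (at points of $V_G$) or at the added points of $S_{i,j}$ precisely when the slope sign flips, so the lifted curve is a walk along edges of $H$.

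\textbf{Soundness in the reverse direction} is the second step: every non-decreasing path in $H$ whose edges lie in the free space maps back to a valid polyline in the Euclidean plane of the same link-count. Here I would invoke \Cref{lemma:nfsd} again to project each NFSD edge to a Euclidean segment, using the inclusion of the slope-sign-change vertices $S_{i,j}$ to guarantee that a single NFSD edge never collapses two distinct Euclidean segments into one or splits one segment into two; this is exactly why those intersection points were added to $V$. Because each edge of $H$ carries weight $1$ and corresponds to one link of the Euclidean polyline, the total path weight equals the number of links, so minimizing weight in $H$ minimizes complexity. Monotonicity of the path (enforced by orienting edges in increasing index, as described before the lemma) is what certifies that the underlying reparameterizations are non-decreasing, hence feasible for the Fr\'echet distance, matching the certificate $F(p_1,p_n,k)$ from the preceding subsection.

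\textbf{The main obstacle} I anticipate is the slope-sign issue flagged in \Cref{lemma:nfsd}: a single Euclidean segment of $M$ can cross the NFSD axis-scaling discontinuity when an edge of $P$ or $Q$ changes slope sign, which would make its NFSD image bent rather than straight. The care needed is to verify that the extra points added on the cell boundaries (the intersections of shortcut extensions through previous vertices, and the points of $S_{i,j}$) are sufficient to absorb every such bend, so that each Euclidean link still corresponds to a path of weight exactly $1$ via the $0$/$1$ edge-weighting rule (weight $0$ when the incoming and outgoing slopes agree after mapping, weight $1$ otherwise). I would handle this by a case analysis over how a Euclidean segment can traverse FSD cells, showing in each case that the relevant breakpoint already appears as a vertex of $H$; the degenerate (collinear) case reduces to the constant-factor scaling already justified in \Cref{lemma:transform,lemma:nfsd}. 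Once this accounting is tight, the equivalence of the optimization problems is immediate and the shortest-path computation on the resulting DAG returns the minimum-complexity path.
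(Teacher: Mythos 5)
Your proposal is correct and follows essentially the same route as the paper's own proof: a two-directional argument showing (completeness) that every optimal simplification lifts to a monotone path through the vertices $V_G \cup \bigcup_{i,j} S_{i,j}(P,Q)$ of $H$, and (soundness) that every shortest path in $H$ projects back to a Euclidean polyline whose link count equals the path weight, with \Cref{lemma:nfsd} supplying the segment-to-segment correspondence and the added shortcut-intersection points absorbing the slope-sign changes. Your treatment of the slope-sign obstacle and the $0$/$1$ weighting is in fact somewhat more explicit than the paper's, but it is the same argument.
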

\begin{proof}
Based on \Cref{lemma:nfsd}, the edges of $E$ that connect two points where the slope of segments between them does not change, i.e. the part of the curve between them is monotone, map to a single segment in the Euclidean plane.

Consider a reparameterization that gives the optimal simplification. For each vertex that is shortcutted in this simplification, there is a point on the boundary of the cell intersecting that path in NFSD. Since $V=V_G \cup (\cup_{i,j} S_{i,j}(P,Q))$ contains all such intersections, its vertices are a subset of $V$. So, for every optimal simplification there is a path in NFSD.

To show every shortest path in NFSD gives an optimal simplification, for each vertex of the shortest path between two edges with different slopes in the Euclidean plane, choose a vertex from $P$ or $Q$ depending on the boundary edge that contained that point.

If we only want the vertices of $P$ to exist in the output solution, the shortest path must be computed using only the vertices of $V_G$ and the edges between the vertices of $V$ or with a subset of $\cup_{i,j} S_{i,j}$ only if the incoming and outgoing edges have the same slope after mapping to the Euclidean plane.

Based on the definition of the edges, the slope is preserved in each edge. So, each edge in NFSD is equivalent to a segment in the Euclidean plane. This means the weight of the curve in the NFSD is equal to the complexity of the polyline in the Euclidean plane.

\end{proof}

Adding $n$ points on each boundary edge increases the complexity of NFSD to $O(n^3)$. So, the simplification that minimizes the Fr\'echet distance between the simplified and original curves can be computed in $O(n^3 \log n)$ time and in $O(n^2\log n)$ time for monotone curves. For monotone curves, all the slopes have the same sign, so we do not need to add points on the boundaries of the cells.

Note that knowing only the slopes of the lines is not enough and the mapped length of the curve is also needed to define a $k$-simplification. More specifically, there can be $\binom{n}{k}$ partial $k$-simplifications of a single polyline that end at the same edge, which can result in distinct optimal matchings. In NFSD, this is equivalent to having multiple monotone shortest paths between $(0,0)$ and $(1,1)$. In a diagram with holes, the intervals on the edges that represent these partial solutions might not be continuous. In NFSDs/FSDs, the free-space acts as a certificate for valid partial matchings for Fr\'echet distance, however, the certificates for $k$-simplifications are only covered by NFSDs.
\subsection{Exact $p$-mean curve}
In \Cref{lemma:mapping}, we show the reparameterization that gives the $p$-mean curve of two curves is the one that gives the Fr\'echet distance between them.

\begin{lemma}\label{lemma:mapping}
The Fr\'echet reparameterization of curves $P$ and $Q$, minimizes the distances to the $p$-MC of $P$ and $Q$.
\end{lemma}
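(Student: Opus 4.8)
The plan is to pin down the optimal value of the objective $\sqrt[p]{d_F(M,P)^p + d_F(M,Q)^p}$ by sandwiching it between a lower bound valid for \emph{every} candidate center curve and an upper bound achieved by the curve built from the Fr\'echet reparameterization, and then to observe that the two bounds coincide. First I would fix the construction: for any reparameterization pair $(\alpha,\beta)$ matching $P(\alpha(t))$ with $Q(\beta(t))$, the induced center curve places at parameter $t$ the $L_p$-center of the two matched points. For a set of exactly two points this center is the midpoint for every $p\ge 1$, since $|x-a|^p+|x-b|^p$ is convex and symmetric, its minimizer lies on the line through $a$ and $b$ (projecting onto that line decreases both distances), and on that line the stationarity condition $(x-a)^{p-1}=(b-x)^{p-1}$ forces $x=(a+b)/2$. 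Hence the matched distances from the center to $P$ and to $Q$ are each exactly half of $|P(\alpha(t))-Q(\beta(t))|$, so the construction is governed entirely by $\max_t |P(\alpha(t))-Q(\beta(t))|$.

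For the upper bound I would take $(\alpha,\beta)$ to be the Fr\'echet-optimal reparameterization and set $M(t)=\tfrac12\big(P(\alpha(t))+Q(\beta(t))\big)$. Aligning $M$ with $P$ and with $Q$ through the common parameter $t$ gives $d_F(M,P),\,d_F(M,Q)\le \tfrac12\max_t|P(\alpha(t))-Q(\beta(t))|=\tfrac12 d_F(P,Q)$, so the objective is at most $2^{1/p}\cdot\tfrac12\,d_F(P,Q)=2^{1/p-1}d_F(P,Q)$. For the matching lower bound, which must hold for an arbitrary center curve $M'$, I would combine the triangle inequality for the Fr\'echet distance, $d_F(P,Q)\le d_F(P,M')+d_F(M',Q)$, with the power-mean inequality $\tfrac{a+b}{2}\le\big(\tfrac{a^p+b^p}{2}\big)^{1/p}$ for $p\ge 1$, which rearranges to $(a^p+b^p)^{1/p}\ge 2^{1/p-1}(a+b)$. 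Applying this with $a=d_F(P,M')$ and $b=d_F(M',Q)$ yields $\sqrt[p]{d_F(P,M')^p+d_F(M',Q)^p}\ge 2^{1/p-1}d_F(P,Q)$. Since the Fr\'echet-reparameterization midpoint curve attains this universal lower bound, it realizes the global minimum of the objective, so the Fr\'echet reparameterization is an optimal choice for constructing the $p$-MC, as claimed.

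The main obstacle I anticipate is the pair of infima hidden inside the objective: $d_F(M,P)$ and $d_F(M,Q)$ are themselves defined as infima over their own reparameterizations, so aligning $M$ with $P$ and $Q$ through the shared parameter $t$ only produces an \emph{upper} bound on each, and one cannot argue optimality reparameterization-by-reparameterization. The power-mean-plus-triangle-inequality lower bound is exactly what resolves this, because it is independent of how $M'$ was generated; matching it against the explicit Fr\'echet-based construction certifies global optimality without reasoning about the center curve's own optimal matchings. A secondary point to treat carefully is the degenerate case $p=1$, where the $L_p$-center of two points is the entire segment rather than a unique midpoint; selecting the midpoint still attains the bound, so the conclusion is unaffected.
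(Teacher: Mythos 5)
Your proof is correct, and it takes a genuinely different route from the paper's. The paper argues locally: it fixes a pair $(s,q)$ of points matched by the Fr\'echet reparameterization, shows via the same stationarity computation you use that the optimal center point on $\overline{sq}$ is the midpoint with per-pair cost $2^{1-p}|sq|^p$, and then concludes that minimizing the largest matched distance --- i.e., using the Fr\'echet matching --- minimizes the cost. That argument implicitly assumes the candidate $p$-MC is itself generated by some simultaneous matching of $P$ and $Q$, and it never rules out that an arbitrary center curve, whose two Fr\'echet distances are realized by unrelated reparameterizations, could do better; this is exactly the ``hidden infima'' obstacle you identify. Your sandwich argument closes that gap: the midpoint curve over the Fr\'echet matching certifies the upper bound $2^{1/p-1}d_F(P,Q)$, while the triangle inequality for $d_F$ combined with the power-mean inequality $\sqrt[p]{a^p+b^p}\geq 2^{1/p-1}(a+b)$ yields the same quantity as a lower bound for \emph{every} candidate curve $M'$, however it is constructed, so the two bounds pin down the optimum globally. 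What the paper's approach buys is brevity and a per-pair formulation that it reuses later when enumerating Fr\'echet events; what yours buys is genuine global optimality over all center curves plus the closed-form optimal value $2^{1/p-1}d_F(P,Q)$. The only technicality to note (common to both proofs) is that for polygonal curves the Fr\'echet infimum is attained, which is needed to speak of ``the'' Fr\'echet reparameterization; your handling of the degenerate case $p=1$, where any point of the matched segment is optimal, is also correct.
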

\begin{proof}
Let $s\in P$ and $q\in Q$ be a pair of points mapped to each other in the Fr\'echet mapping between $P$ and $Q$. Let $m$ be the point on $\vv{sq}$ which lies on the $p$-MC of $P,Q$. The goal is to minimize the cost of $p$-MC for these points:
$
\min_m |sm|^p+|mq|^p  = \min_m |sm|^p+(|sq|-|sm|)^p.
$
Then, we take the derivative of the above cost expression $|sm|^p+(|sq|-|sm|)^p$ in terms of $|sm|$:
\\
$
p|sm|^{p-1}-p(|sq|-|sm|)^{p-1}=0 \Leftrightarrow
|sm|^{p-1}=(|sq|-|sm|)^{p-1}\Rightarrow
|sm|=|sq|/2.
$
This is a minimum of the function, since for $|sm|>|sq|/2$ the derivative is positive and for smaller values it is negative.
Substituting this value in the cost expression gives $2^{1-p}|sq|^p$.
This means that the minimum of $|sq|$ also minimizes the cost expression.
The maximum of $|sq|$ for all pairs $s,q$ is the maximum distance in the reparameterizations of $P,Q$ that realizes the Fr\'echet distance.

\end{proof}

Based on \cite{setofcurves}, the higher dimensional FSDs (for more than two curves) can be constructed by building the FSDs of pairs of curves, extending them in the direction of the axes corresponding to the rest of the curves, and taking their intersection. Based on \Cref{lemma:transform}, NFSD represents the same set of reparameterizations as FSD.
A $(\epsilon_1,\ldots,\epsilon_L)$-NFSD is the $L$-dimensional NFSD in which the NFSD for each pair $P_i,P_j$ of the input curves uses distance $\epsilon_i$ from $P_i$ and distance $\epsilon_j$ from $P_j$ as the distance (to define the free-space).

\begin{lemma}\label{lemma:epsilons}
Given a set of non-negative constants $\epsilon_1,\ldots,\epsilon_L$ and a set of curves $P_1,\ldots,P_L$, the shortest path in the $\overrightarrow{\epsilon}$-NFSD of $P_1,\ldots,P_L$ gives the minimum-link path in the Euclidean plane with distance at most $\epsilon_i$ from $P_i$, for each $i=1,\ldots,L$. Assume $\overrightarrow{\epsilon}=(\epsilon_1,\ldots,\epsilon_L)$ and $\lVert \overrightarrow{\epsilon} \rVert_p=\eta$.
\end{lemma}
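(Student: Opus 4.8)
The plan is to bootstrap the two-curve result of \Cref{lemma:negative} up to the $L$-curve diagram with individual radii. First I would make precise what the free-space of the $\vv{\epsilon}$-NFSD records. Since the $L$-dimensional diagram is assembled from the pairwise NFSDs of \cite{setofcurves}, a configuration $(t_1,\ldots,t_L)$ is free exactly when every pair of balls $B(P_i(t_i),\epsilon_i)$ and $B(P_j(t_j),\epsilon_j)$ meet, i.e. $|P_i(t_i)-P_j(t_j)|\le \epsilon_i+\epsilon_j$ for all $i,j$, and I would record that this is the condition under which there is an output point $m$ of the simplification lying within $\epsilon_i$ of each matched vertex $P_i(t_i)$. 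I would then invoke \Cref{lemma:transform}: because each coordinate reparameterization $f_i$ used to build the NFSD is a one-to-one non-decreasing map of $[0,1]$, the set of feasible $\vv{\epsilon}$-reparameterizations is unchanged, so any monotone path from $(0,\ldots,0)$ to $(1,\ldots,1)$ in the $\vv{\epsilon}$-NFSD induces a genuine simultaneous reparameterization of $P_1,\ldots,P_L$ realizing all the distance bounds.

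The second step transfers the link-counting from \Cref{lemma:nfsd}. Applying that lemma's normalization to each axis separately, a straight segment of the path that keeps a fixed slope with respect to every axis corresponds to a single straight segment of the output polyline in the Euclidean plane; the shortcut-intersection points $S_{i,j}$ inserted on the cell boundaries absorb the sign-change cases exactly as in the planar proof. Consequently the weight of a path, one unit per link, equals the number of edges of the output curve $M$, so minimizing links in the plane coincides with minimizing the weighted path length in the diagram.

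The third step is the reduction to shortest paths. I would build the $L$-dimensional analogue of the graph $H$ from \Cref{lemma:negative}: its vertices are the intersections of the free-space with the grid (where some curve reaches a vertex) together with the shortcut-intersection sets $S_{i,j}$ over all pairs, and its directed edges, oriented by increasing index, join two vertices by a free, non-decreasing, fixed-slope segment and carry weight $1$. Exactly as in \Cref{lemma:negative}, every optimal constrained simplification yields such a path (each shortcut vertex leaves a trace on some cell boundary, hence a vertex of the graph), and conversely every shortest path yields a constrained simplification of the same complexity; a DAG shortest path then returns the minimum-link path with $d_F(M,P_i)\le\epsilon_i$ for all $i$.

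I expect the main obstacle to be the simultaneity of the normalization across all $L$ axes: \Cref{lemma:nfsd} rescales each curve by its own slope factor $1/\sqrt{m_{P_i}^2+1}$, and I must check that a segment of the path can be kept fixed-slope with respect to \emph{every} axis at once, rather than only pairwise, so that it still maps to a single Euclidean segment. A related subtlety is that the pairwise ball condition certifies only pairwise intersection, whereas the output point $m$ must lie in the common intersection $\bigcap_i B(P_i(t_i),\epsilon_i)$; I would address this by choosing $m$ along the path as the center of the matched configuration and tracking the constant-factor slack between the pairwise (diameter) and common-center (radius) conditions noted before this lemma, which is the source of the approximation factors appearing in \Cref{table:results}.
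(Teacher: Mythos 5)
Your second and third steps (transferring the link count via the per-axis normalization of \Cref{lemma:nfsd}, and reducing to a DAG shortest path in the $L$-dimensional analogue of the graph $H$ from \Cref{lemma:negative}) follow the paper's proof essentially verbatim, and your worry about simultaneous normalization across all $L$ axes is precisely the point the paper disposes of by noting that the scaling constants in different dimensions are independent of one another. The divergence --- and the genuine gap --- is in your first step. You take the free space of the $\vv{\epsilon}$-NFSD to be the set of configurations $(t_1,\ldots,t_L)$ at which the balls $B(P_i(t_i),\epsilon_i)$ intersect \emph{pairwise}, and then assert that this is the condition under which an output point $m$ exists within $\epsilon_i$ of every matched point. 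That implication is false for $L\geq 3$: pairwise intersection of disks in the plane does not yield a common point (Helly's theorem in $\mathbb{R}^2$ requires every \emph{triple} to intersect), so a monotone path through your free space certifies no output curve at distance $\epsilon_i$ from each $P_i$.

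You flag this yourself at the end, but your proposed repair --- choose $m$ as a center of the matched configuration and absorb the pairwise-versus-common-center discrepancy into ``constant-factor slack'' --- proves a different, strictly weaker statement: a minimum-link curve at distance $O(\epsilon_i)$ from each $P_i$, not at distance $\epsilon_i$ as the lemma claims. The paper's proof never incurs this slack because it works with the common-witness condition from the outset: a configuration is free iff there exists a point $p$ with $\lVert P_i(t_i)-p\rVert_2\leq\epsilon_i$ for all $i$ simultaneously, so the free space inside a cell is (a transformed copy of) the region where all $L$ balls share a point, and the output curve is threaded through exactly those witness points. (The paper's definition of the $\vv{\epsilon}$-NFSD via pairwise diagrams is admittedly loose on this point, but its proof unambiguously uses the simultaneous condition.) With that reading, exactness holds by construction and the remainder of your argument goes through unchanged; to repair your proof, replace the pairwise condition by the common-intersection condition when defining the $L$-dimensional free space and drop the constant-factor bookkeeping entirely.
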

\begin{proof}
A point $p$ in the free-space inside each cell of an $\overrightarrow{\epsilon}$-NFSD for $L$ curves satisfies:
\[
\forall i=1,\ldots,L \;:\; \lVert P_i(t_i)- p\rVert_2\leq \epsilon_i,
\]
So,
\[
\lVert \overrightarrow{\epsilon} \rVert_p=\eta \Leftrightarrow\sum_{i=1}^n \epsilon_i^p = \eta^p \Leftrightarrow \sum_{i=1}^n \lVert P_i(t_i)- p\rVert_2^p \leq \eta^p.
\]
These are a set of $L$ ellipses, which are monotone except at their extreme points and the boundaries of the domain of their definition.
Candidates for the optimal matchings of each point are the intersections between the grid lines, the extentions of the shortcut lines, and the ellipses.

The free-space inside each cell of NFSD for two curves is an scaled ellipse, as proved in \Cref{lemma:nfsd}; the higher dimensional NFSD is similarly proved to have an ellipsoid inside each cell as the free-space.

Using \Cref{lemma:negative}, the complexity (the number of vertices) of the shortest path in NFSD is equal to the complexity of the simplification in the original space.
Since the scaling constants in each dimension are independent from each other, \Cref{lemma:nfsd} generalizes to any dimension, i.e. any number of curves.

Let $M$ be the minimum-link path from $(P_1(0),\ldots,P_L(0))$ to $(P_1(1),\ldots,P_L(1))$ in this $\overrightarrow{\epsilon}$-NFSD.
We showed that $M$ satisfies the Fr\'echet distance $\epsilon_i$, i.e. for each point $m\in M$, the distances to each of the curves satisfy $d(m,P_i(t_i))\leq \epsilon_i$.
Any optimal simplification maps to a path $M'$ in $\overrightarrow{\epsilon}$-NFSD, based on~\Cref{lemma:transform}. $M'$ is a polyline for because $\overrightarrow{\epsilon}$-NFSD preserves the slope of the lines with respect to the input curves and changing $p$ only effects the shape of the free-space.

\end{proof}

The changes to the graph $H$ built from an $\overrightarrow{\epsilon}$-NFSD after changing $\overrightarrow{\epsilon}$ form a discrete set of events, i.e. values $\overrightarrow{\epsilon}$ at which $H$ changes.
In \Cref{lemma:local}, we discuss the events at which the complexity of the shortest path in $\overrightarrow{\epsilon}$-NFSD change, i.e. the certificates for $p$-MC.
\begin{lemma}\label{lemma:local}\label{lemma:estimation}
The number of events for the min-$\epsilon$ $p$-MC simplification of a curve $P$ with respect to $L$ curves is $O(n^{2L})$, and each event can be computed in $O(1)$ time.
\end{lemma}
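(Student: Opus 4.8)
The plan is to realize the min-$\epsilon$ $p$-MC simplification as a minimum-complexity path in the $\vv{\epsilon}$-NFSD of the $L$ curves, and to show that, as $\vv{\epsilon}$ varies, this path changes its combinatorial type only at a discrete set of critical values, the events. By \Cref{lemma:negative} and \Cref{lemma:epsilons} the object that controls the complexity is the graph $H$ built on the $\vv{\epsilon}$-NFSD, whose vertices are the intersections of the free space with the grid lines together with the shortcut--segment intersections $\bigcup_{i,j}S_{i,j}$, and whose edges are the monotone free-space segments (the candidate shortcuts). First I would observe that the complexity of the shortest path in $H$ can change only when $H$ itself changes combinatorially, i.e.\ when a vertex appears or disappears, or when a candidate shortcut toggles between lying entirely inside the free space and leaving it. Hence it suffices to bound the number of distinct $\vv{\epsilon}$ at which such a toggle can occur.

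Next I would count these toggles by localising each one to a small number of cells. A cell of the $L$-dimensional NFSD is the product of one edge taken from each of the $L$ curves, so there are $O(n^L)$ cells, and each cell is pinned down by the $2L$ vertices that bound its defining edges. A candidate shortcut is, by \Cref{lemma:nfsd}, a single monotone segment whose two endpoints lie in two such cells; its combinatorial type is therefore fixed by the ordered pair (start cell, end cell), giving $O(n^L)\cdot O(n^L)=O(n^{2L})$ possible types. For one fixed type I would argue that the shortcut can toggle only $O(1)$ times: by \Cref{lemma:epsilons} the free space is the sublevel set of the cost $\sum_i \lVert P_i(t_i)-p\rVert_2^p$ at level $\eta^p$, and along the segment this cost is a convex function of the segment parameter (an affine map composed with the convex function $\lVert\cdot\rVert_2^p$, $p\ge 1$). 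Thus the feasible part of the segment is a single interval that grows monotonically in $\eta$, so the whole segment becomes feasible past exactly one threshold value of $\eta$. The remaining vertex events---a grid intersection reaching a cell corner, or a point of $S_{i,j}$ crossing the free-space boundary---are charged in the same way to a cell or a pair of cells and contribute at most $O(n^{2L})$ further critical values. Summing over all types yields $O(n^{2L})$ events.

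For the $O(1)$ cost per event, I would note that fixing the combinatorial type singles out a constant number of defining incidences (a tangency of the segment with the boundary, or a passage through a corner), each a bounded-degree equation in $\eta$ whose coefficients are read directly from the vertices of the two cells involved; solving it takes $O(1)$ time once the type is known.

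The step I expect to be the main obstacle is the $O(1)$-toggle claim per shortcut type. Because the $L_p$-center for general $p$ has no closed form, the free-space boundary is an implicitly defined algebraic surface rather than a plain ellipsoid, so the clean argument above rests on establishing the convexity (equivalently, the monotone-in-$\eta$ growth of the free space) rigorously and on ruling out degenerate tangency configurations whose number could otherwise exceed a constant. Making this convexity argument precise, and confirming that it survives the normalisation of \Cref{lemma:nfsd} and \Cref{lemma:transform}, is the delicate part of the proof.
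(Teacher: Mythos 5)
Your overall architecture matches the paper's: both treat events as the values of $\vv{\epsilon}$ at which the graph $H$ built on the $\vv{\epsilon}$-NFSD changes combinatorially, and both obtain $O(n^{2L})$ by pairing two families of $O(n^L)$ objects. The difference is \emph{what} gets paired, and that difference is where your proposal has a genuine gap. You charge events to ordered pairs (start cell, end cell) and then need the claim that each such ``shortcut type'' toggles $O(1)$ times, argued via convexity of the cost along a fixed segment. But the candidate shortcuts of $H$ are not fixed segments: their endpoints are intersections of the free space with the grid lines, which move continuously as $\vv{\epsilon}$ varies, so there is no single fixed segment along which to apply the convexity argument. Moreover, a shortcut joining two cells crosses up to $\Theta(Ln)$ intermediate cells, in each of which the constraint is a different convex function, so the cost along the segment is only piecewise convex; and since $\vv{\epsilon}$ is an $L$-dimensional parameter, ``a single threshold'' is not even well defined unless you restrict to a ray (e.g.\ uniform scaling by $\eta$). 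You flagged this yourself as the main obstacle---correctly---but the flag does not fill the hole, and the same hole undermines your $O(1)$-per-event claim: the defining incidence for a moving, multi-cell segment is not one bounded-degree equation whose coefficients can be read off from two cells.

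The paper's proof avoids exactly this difficulty by pairing \emph{fixed} objects: its $O(n^L)$ shortcuts are the segments through the curves' vertices together with their extensions/reflections (determined by the input, independent of $\vv{\epsilon}$), and these are intersected with the $O(n^L)$ fixed grid edges of the NFSD. Each of the resulting $O(n^{2L})$ intersection points is a fixed point of the diagram, and its critical value is the $\vv{\epsilon}$ at which the free-space boundary---a scaled and translated $L_p$ ball, by \Cref{lemma:epsilons}---passes through that point; this is a single substitution, which is what justifies the $O(1)$ time per event. If you want to keep your cell-pair decomposition, you would need to (i) fix a one-parameter scaling direction in $\vv{\epsilon}$-space, (ii) prove monotone growth of the free space along it (your within-cell convexity observation is correct and suffices for this part), and (iii) replace the moving shortcut by the fixed arrangement of shortcut lines and grid edges---at which point you have essentially reconstructed the paper's argument.
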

\begin{proof}
Changes to the graph $H$ built from an $\overrightarrow{\epsilon}$-NFSD when changing $\overrightarrow{\epsilon}$ happen at the intersections of the free-space inside the cell with the cell boundary, or when the monotonicity of the path between the cells changes which happen when the intersections of intervals on the boundaries of the cells. In \Cref{lemma:epsilons}, we showed the shape of the free-space inside a cell is a transformed unit ball of $L_p$ norm, i.e. $x^p+y^p=1$. For each edge of one curve and a vertex from the rest of the polylines, the intersection of this transformed ball and the boundary gives $L-1$ intervals.
These events are the intersections of the transformed ball with the boundary, and the intersections of the projections of the intervals on their shared edge.
Changing $\overrightarrow{\epsilon}$ scale and translates the transformed unit ball of the $L_p$ norm that represents the free-space inside each cell. So, the intersections of it with each cell boundary is still one continuous interval for each (vertex,edge) pair.
Each NFSD has $O(n^L)$ cells, each with $2L$ intersections, so the number of these events is $O(Ln^L)$.

For different slope signs, instead of a straight line segment, we look for segments $s,s'$ that share an endpoint on the edge $e$ between the cell with different slopes and its neighboring cells, such that $s$ and $s'$ are the reflections of each other with respect to $e$. As discussed in the definition of $H$ for NFSD, these are the set of shortcuts and their extensions or reflections in case of slope changes.
Based on the type of simplification, the size of $H$ is different:
\begin{itemize}
\item Any point on one of the input curves can be used in $p$-MC:\\
There are $\lvert P_i \rvert \prod_{j=1, j\neq i}^L \lvert P_j\rvert=O(n^L)$ shortcuts, each intersecting with each of the $O(n^L)$ edges of the NFSD grid, resulting in $O(n^{2L})$ event points.
\item The vertices of one of the input curves, $P_i$:\\
Only the edges of $H$ that change slope at a point on the edges that are on the grid line for a vertex of $P_i$ are allowed.
Since $O(n^{L-1})$ edges remain, each containing $O(n^{L-1})$ points on them ($S_{i,j}$), the number of events is $O(n^{2L-2})$.
\end{itemize}

\end{proof}

Based on the generalization of NFSD in \Cref{lemma:epsilons}, we solve the $p$-MC problem in \Cref{theorem:pmc}.
\begin{theorem}\label{theorem:pmc}\label{lemma:time}
The $p$-MC of a $L$ curves can be computed in $O(Ln^{5L}\log n)$ time.
\end{theorem}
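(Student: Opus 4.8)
The plan is to recast the exact $p$-MC computation as a search over a one-parameter family of normalized free-space diagrams, running the shortest-path machinery of \Cref{lemma:epsilons} and \Cref{lemma:negative} inside each member of the family. Writing $\eta=\lVert\overrightarrow{\epsilon}\rVert_p$ for the error budget, I would work with the $\eta$-NFSD whose free-space inside each cell is the transformed $L_p$ unit ball $\{(t_1,\dots,t_L):\min_{m}\sum_{i=1}^L\lVert P_i(t_i)-m\rVert_2^p\le\eta^p\}$, so that a point of the free-space is precisely a tuple of matched points together with their $L_p$-center $m$. By the $L$-curve generalization of \Cref{lemma:mapping}, for every matched tuple the optimal vertex of the mean curve is exactly this $L_p$-center, so a monotone path through the $\eta$-NFSD both certifies a budget-$\eta$ matching and yields the corresponding candidate mean curve by reading off the centers $m$ along the path.

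For a fixed $\eta$, \Cref{lemma:epsilons} together with \Cref{lemma:negative} says that the minimum-link monotone path from $(P_1(0),\dots,P_L(0))$ to $(P_1(1),\dots,P_L(1))$ in the $\eta$-NFSD equals, after mapping back to the Euclidean plane, a minimum-complexity mean curve of cost at most $\eta$. Hence the decision ``is there a mean curve with at most $k$ vertices and objective at most $\eta$?'' reduces to a single shortest-path computation in the graph $H$ built from the $\eta$-NFSD: the answer is yes iff the min-link path has at most $k$ edges. Enlarging the free-space can only add vertices and edges to $H$ and never destroys a valid path, so the min-link value is monotone non-increasing in $\eta$; the $p$-MC objective we seek is therefore the smallest $\eta$ at which this value first drops to $k$.

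It remains to locate that threshold. The critical values of $\eta$ are exactly the events of \Cref{lemma:local}: the values at which the transformed $L_p$ ball meets a cell boundary, at which two boundary intervals begin or cease to overlap, and, for sign changes, at which a shortcut and its reflection align. By \Cref{lemma:local} there are $O(n^{2L})$ of them and each is computed in $O(1)$, so I would generate and sort all candidate thresholds and binary search among them, at each probe rebuilding $H$ and running the min-link shortest path. Charging the event enumeration and sort, the size of $H$ (bounded via \Cref{lemma:negative} by the $O(n^{2L})$ candidate bend points together with the cells each candidate segment must be tested against for free-space membership), and the per-probe shortest-path cost against the $O(L\log n)$ probes, while carrying the $L$-dimensional geometry through, gives the claimed $O(Ln^{5L}\log n)$ bound.

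The main obstacle is the $L$-dimensional event analysis underlying the time bound: I must show that the $O(n^{2L})$ events of \Cref{lemma:local} are complete, i.e.\ that between consecutive events neither the free-space cells of $H$ nor the feasible-slope and reflection structure of its edges change, so that the min-link value is piecewise constant with breakpoints only at these events and the binary search is sound. Secondary care is needed because the per-cell objective is the pointwise $L_p$-center cost rather than $\lVert\overrightarrow{\epsilon}\rVert_p$ itself; exchanging the outer maximum over the path with the sum over curves is exactly where any relaxation enters, so the optimum of the NFSD formulation must be identified with, or carefully bounded against, the true $p$-MC objective before the running-time count can be claimed to solve the stated problem.
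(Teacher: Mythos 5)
Your overall machinery matches the paper's (the event set of \Cref{lemma:local}, a per-event construction of $H$, and a min-link shortest path via \Cref{lemma:epsilons} and \Cref{lemma:negative}), but your search structure is genuinely different, and that difference is where the proposal breaks. You collapse the error parameter to the scalar $\eta=\lVert\overrightarrow{\epsilon}\rVert_p$ and define the free space by the pointwise condition $\min_m\sum_{i=1}^L\lVert P_i(t_i)-m\rVert_2^p\le\eta^p$; this gives you nested free spaces, hence a monotone min-link value, hence a binary search with $O(L\log n)$ probes. But this scalar formulation is a \emph{relaxation} of the $p$-MC objective, not a reformulation of it: the true cost is $\sum_i\bigl(\max_t\lVert P_i(t_i(t))-M(t)\rVert\bigr)^p$, whereas your constraint only enforces $\max_t\sum_i\lVert P_i(t_i(t))-M(t)\rVert^p\le\eta^p$. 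Since different curves can attain their maxima at different times, a path feasible for your relaxation at level $\eta$ can correspond to a curve whose true cost is as large as $L^{1/p}\eta$, so the threshold your binary search finds can be strictly below the optimum and the returned curve need not be exact. You flag this yourself as ``secondary care,'' but it is the crux: without closing it, the theorem (an exact algorithm) is not proved.

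The paper avoids this trap by never scalarizing: it keeps the vector $\overrightarrow{\epsilon}=(\epsilon_1,\ldots,\epsilon_L)$ of per-curve budgets, for which the $\overrightarrow{\epsilon}$-NFSD condition $d(m,P_i(t_i))\le\epsilon_i$ for all $i$ exactly captures $d_F(M,P_i)\le\epsilon_i$, and then it \emph{exhaustively enumerates} all $O(n^{2L})$ vector-valued events of \Cref{lemma:local}, computing a shortest path for each and keeping the best candidate (this is also what \Cref{alg:continuous} does). That is precisely why the paper pays $O(n^{2L})$ probes times $O(Ln^{3L}\log n)$ per shortest path to obtain $O(Ln^{5L}\log n)$: vector events carry no total order and no monotone predicate, so your binary search cannot be transplanted to the exact (vector) formulation. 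A secondary symptom of the same problem is your accounting: $O(L\log n)$ probes at $O(Ln^{3L}\log n)$ each, plus sorting $O(n^{2L})$ events, is far below $n^{5L}$, so the claimed bound could only arise from the enumeration the paper actually performs, not from the search you describe.
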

\begin{proof}
As long as the graph built on $\overrightarrow{\epsilon}$-NFSD is not changed, changing $\overrightarrow{\epsilon}$ will not change the solution. So, it is enough to check the values $\overrightarrow{\epsilon}$ from \Cref{lemma:local}. For each of these values, we build a $\overrightarrow{\epsilon}$-NFSD and compute the shortest path in $H$, one of which is the $p$-MC of the curves (\Cref{lemma:negative}). There are $O(n^{2L})$ values $\overrightarrow{\epsilon}$, and computing the shortest path in an $\overrightarrow{\epsilon}$-NFSD takes $O(Ln^{3L}\log n)$ time, resulting in a $O(Ln^{5L}\log n)$ time algorithm.

\end{proof}
\Cref{alg:continuous} implements \Cref{theorem:pmc}. Also, the algorithm can be used to compute a min-$\epsilon$ simplification.

\begin{algorithm}[h]
\caption{Continuous $p$-Mean Curve}
\label{alg:continuous}
\begin{algorithmic}[1]
\Require{Trajectories $P_1,\ldots,P_L$, an integer $p\geq 1$, a constant $\epsilon>0$}
\Ensure{A trajectory $M$}
\State{$\Epsilon$= The values $\epsilon$ for the events of $p$-MC.}
\State{$\overrightarrow{\epsilon_0}=\overrightarrow 0$}
\For{$k=0,\ldots,n$, and $\overrightarrow{\epsilon_0}=\overrightarrow 0$}
\For{$\overrightarrow{\epsilon} \in \Epsilon$}
\State{Build a $\overrightarrow{\epsilon}$-NFSD $F$ for $\{P_i\}_{i=1}^L$.}
\State{$\tau_{\overrightarrow{\epsilon}}=$ Find the shortest path in $H$ built from $F$.}
\State{$\overrightarrow{\epsilon_0}$=the $\overrightarrow{\epsilon}$ with $\lVert\overrightarrow{\epsilon}\rVert_p\leq \epsilon$ for which a path of length $k$ is found.}
\EndFor
\EndFor
\State{Build a curve $M$ by reporting the points of $\tau_{\overrightarrow{\epsilon_0}}$ in the Euclidean plane.}
\end{algorithmic}
\end{algorithm}

Since the time complexity of the exact $p$-MC algorithm is exponential in the number of curves $(L)$, we discuss approximation algorithms for such cases.

\subsection{Reducing the Continuous Version to the Discrete Version: A Simplified Version of \Cref{alg:continuous}}
Here, we discuss a simplified version of \Cref{alg:continuous} for $p=\infty$ and $L=2$, where instead of computing the path in the parameter space (FSD), we simulate the algorithm in the original space while computing the dynamic program for FSD. So, the complexity does not depend on the ply.

Let $H$ be the polygon built on the intersections of the ellipses (the free spaces) in each cell $C$ and the boundary of the cell. For each pair of vertices $p,q$ in $H$, compute the intersection of $pq$ with the boundary of $H$, including the boundary of the holes. Also, consider the extension of the shortcuts and the intersections between them.
Add the points on the original curves corresponding to these points in FSD.
Note that constructing the free-space diagram is not necessary to compute these events.

In \Cref{alg:simp}, we used unit direction vectors to indicate the slopes of the segments.
The output of the algorithm is also curve-restricted, i.e. the vertices of the simplification lie on the edges of the input curve. \Cref{lemma:simplify} formulates the effects of the modifications.
\begin{algorithm}[h]
\caption{Bi-criteria approximation of $p$-MC of two curves}\label{alg:simp}
\begin{algorithmic}[1]
\Require{A free-space diagram $D$ of $P$ with error $\epsilon$}
\Ensure{A monotone path $T$ from $P[0]$ to $P[n-1]$}
\State{Build $G$ from $D$.}
\State{$Q=$ Add the events of $G$ to $P$ with error $\epsilon$.}
\State{$D'=$ the FSD for $Q$ with itself.}
\State{$S[i][j]=\infty, \rho[i][j]=-1\quad\forall i,j\in[1,|Q|]$}
\State{$S[1][1]=0$}
\For{$i=2,\cdots,|Q|$}
\For{$j=1,\cdots,|Q|$}
\If{$F(D',Q[i-1,\cdots,i],Q[k,\cdots,j])$}
\State{$S[i][j],\rho[i][j]=\min_{k=0,\cdots,j} S[i][k]+
\begin{cases}
0 &\text{if $1_{\overrightarrow{Q[i-1]Q[i]}}=1_{\overrightarrow{Q[\rho[i][j]]Q[j]}}$}\\
1 &\text{otherwise}
\end{cases}
,k
$}
\Comment{If the last edge and the next edge are collinear, do not increase the length, otherwise, increase it by one.}
\EndIf
\EndFor
\EndFor
\If{$S[i][j]=-1$}
\\ \Return{FAILED}
\EndIf
\State{$i\gets|Q|,j\gets|Q|$}
\While{$i\geq1 \wedge j\geq1$}
\State{Add $S[i]$ to the end of $T$, if $1_{\overrightarrow{T[i-1]T[i]}}\neq 1_{\overrightarrow{T[i]Q[\rho[i][j]]}}$.}
\State{$i\gets i-1, j\gets \rho[i][j]$}
\EndWhile
\\ \Return{$T$ in the reverse order}
\end{algorithmic}
\end{algorithm}
An intersection between two shortcuts is a point in FSD that does not fall on an edge of FSD grid, i.e. it does not map a vertex of one of the input curves, therefore, it cannot be chosen as a vertex of the simplification. \Cref{lemma:simplify} proves there is a path of twice the length that goes uses only the vertices of one of the curves.
\begin{lemma}\label{lemma:simplify}
Each monotone path of length $k$ in the parameter space can be mapped to a monotone path of length at most $2k$ in $H$, if the events of the intersections between the shortcuts are not used as the vertices of the simplification.
\end{lemma}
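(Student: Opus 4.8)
The plan is to take the given monotone path $\pi$ of length $k$ and surgically replace every vertex that is an intersection of two shortcuts (a point lying in the interior of a cell) by a pair of grid-aligned vertices, so that the resulting path bends only at vertices of $H$ while at most doubling the number of segments. First I would decompose $\pi$ into its maximal straight pieces $\overline{v_0 v_1},\ldots,\overline{v_{k-1}v_k}$, whose interior bends $v_1,\ldots,v_{k-1}$ are the vertices of the candidate simplification, and classify each $v_i$ as either grid-aligned (lying on a grid line, hence already a vertex of $V_G\subseteq H$ that maps a vertex of one of the curves) or interior (a shortcut--shortcut intersection, which is forbidden).

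For each interior bend $v_i$ I would localize it inside the unique FSD cell $C_i$ containing it and follow the two incident segments until they cross the boundary of $C_i$: let $e_i$ be where the incoming segment $\overline{v_{i-1}v_i}$ enters $C_i$ and $f_i$ where the outgoing segment $\overline{v_i v_{i+1}}$ leaves it. Because $e_i$ and $f_i$ lie on grid lines and belong to the free space, they are vertices of $V_G$, hence of $H$, and in particular they are not shortcut--shortcut intersections. I would then discard the detour $e_i \to v_i \to f_i$ and join $e_i$ to $f_i$ by a single chord. The key structural fact I would invoke is that the free space inside a single cell is a scaled ellipse, as established in \Cref{lemma:nfsd}, and is therefore convex; since $e_i$ and $f_i$ lie in the free space and $\pi$ is monotone (so $e_i \le f_i$ coordinatewise), the chord $\overline{e_i f_i}$ stays in the free space, has non-negative slope, and is monotone, so it is a legal edge of $H$.

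Performing this replacement at every interior bend yields a new path $\pi'$ whose vertices are the endpoints $v_0,v_k$, each retained grid-aligned bend, and the pair $e_i,f_i$ created for each interior bend; the portions of $\pi$ lying strictly between two consecutive boundary crossings are sub-segments of the original segments and remain free and monotone, so by \Cref{lemma:negative} $\pi'$ still corresponds to a valid polyline in the Euclidean plane. I would finish with a counting argument: each of the $k-1$ interior bends contributes at most two vertices while each grid-aligned bend contributes one, so $\pi'$ has at most $2(k-1)+2=2k$ vertices and thus length at most $2k$.

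I expect the main obstacle to be the accounting that keeps the blow-up at exactly a factor of two rather than three. The delicate cases are when two consecutive bends fall in the same cell, or when both endpoints of a single segment are interior bends so that the segment is trimmed at both ends; there I must argue that the trimmed middle piece $\overline{f_{i-1} e_i}$ is still a single monotone free segment and that the introduced crossing points are distinct and correctly ordered along $\pi$, so that no segment is split beyond what the two-for-one bookkeeping allows. Establishing that the per-cell free space is convex, so that every chord replacement is simultaneously free and monotone regardless of whether entry and exit occur on the bottom, left, top, or right boundary of the axis-aligned cell, is what makes the substitution sound, and verifying this uniformly across all entry/exit configurations is the crux of the argument.
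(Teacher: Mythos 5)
Your proposal is correct and follows essentially the same argument as the paper's proof: both replace each interior (shortcut--shortcut) bend by the two points where its incident segments cross the cell boundary, invoke convexity of the per-cell elliptical free space (\Cref{lemma:nfsd}) to show the connecting chord is free and monotone, and count at most two new vertices per original bend to get the factor of $2k$. Your write-up is in fact somewhat more careful than the paper's, since it explicitly flags the boundary cases (consecutive bends in one cell, segments trimmed at both ends) that the paper dispatches with a brief appeal to ``the same convexity argument in neighboring cells.''
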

\begin{proof}
Consider a monotone path in the FSD with a point $p_i$ on the monotone path inside the free space. Let $C$ be the cell containing $p_i$.
Compute the intersection of the neighboring edges $p_{i-1}p_i$ and $p_ip_{i+1}$ with $H$ and call them $u$ and $v$ (See \Cref{fig:2k}).
\begin{figure}[h]
\centering
\includegraphics[scale=0.3]{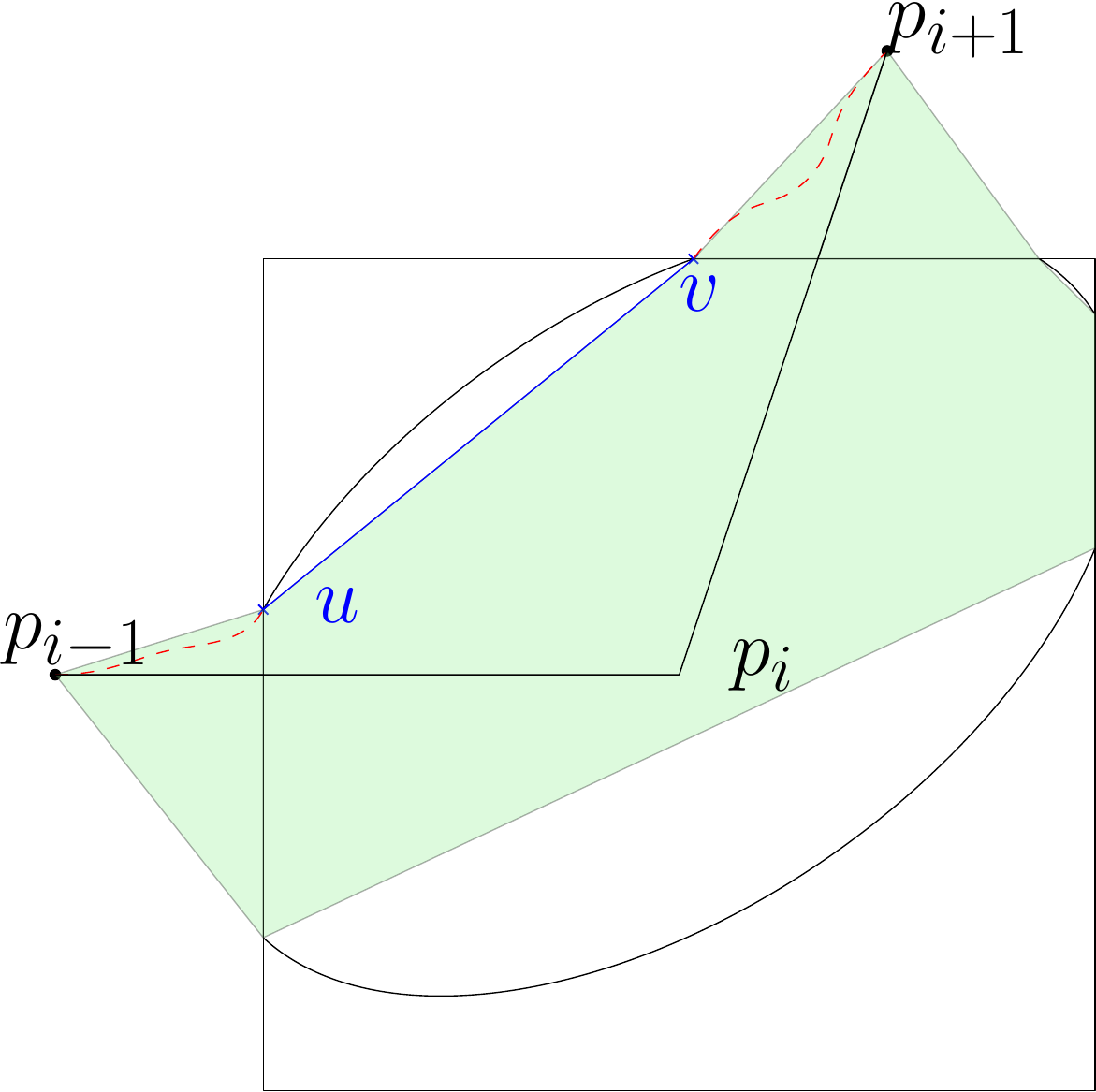}
\label{fig:2k}
\end{figure}
Since the free space in each cell is an ellipse, and therefore convex, replacing $p_i$ with $u$ and then $v$ does not change the monotonicity of the curve, and only increases the length of the curve by $1$. Using the same convexity argument in neighboring cells, $p_{i-1}u$ and $vp_{i+1}$ fall inside the free space.
This can happen once for each vertex, since $p_i$ will be on an edge of the grid (equivalently a vertex of the curve) after that.
By induction on the length of the curve, repeating this for all vertices gives a path of length at most $2k$.

\end{proof}
\Cref{alg:simp} adds a vertex for the intersections of disks of radius $\epsilon$ with every edge and shortcut, and then computes the simplification. In \Cref{theorem:epsilons}, we show there is a $(2k)$-simplification of error at most $2\epsilon$, if a $k$-simplification with error at most $\epsilon$ exists.
\begin{theorem}\label{theorem:epsilons}
\Cref{alg:simp} finds a simplification of the input curve with at most $2k$ vertices and error at most $2\epsilon$, where $k$ is the size of the optimal simplification using any subset of points in the plane as vertices.
\end{theorem}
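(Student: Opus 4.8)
The plan is to establish the two criteria separately and then verify that the dynamic program in \Cref{alg:simp} attains both bounds at once. The vertex-count bound ($2k$) comes directly from \Cref{lemma:simplify}, while the error bound ($2\epsilon$) comes from the snapping induced by the added radius-$\epsilon$ disk events, i.e. the reduction to the discrete Fr\'echet distance. I would phrase the argument as an exchange: exhibit a single feasible curve-restricted path that already meets $2k$ vertices and $2\epsilon$ error, and observe that the DP outputs something at least as good.

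First I would start from an optimal continuous simplification $M^\ast$ that uses $k$ arbitrary points of the plane and has Fr\'echet error $\epsilon$. By \Cref{lemma:epsilons} together with \Cref{lemma:negative}, $M^\ast$ corresponds to a monotone path $\pi^\ast$ of length $k$ in the parameter space of the $\epsilon$-NFSD. Applying \Cref{lemma:simplify} to $\pi^\ast$ yields a monotone path $\pi$ of length at most $2k$ whose vertices lie in $H$ and avoid the shortcut-intersection events. The key point is that every vertex of $\pi$ lies on an FSD grid line, so it maps a vertex of one input curve to a point on an edge of the other; these are exactly the curve-restricted vertices that \Cref{alg:simp} is allowed to pick, which yields the $2k$ bound on the number of vertices.

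For the error bound I would argue a snapping property for the discretized candidate set. \Cref{alg:simp} augments $P$ into $Q$ by inserting the intersections of the radius-$\epsilon$ disks with every edge and shortcut, and then works in the self-FSD $D'$ of $Q$ using the feasibility test $F(D',\cdot,\cdot)$. The claim is that any vertex of $\pi$ realizing continuous error $\epsilon$ can be replaced by a nearby inserted vertex of $Q$ realizing error at most $2\epsilon$: moving from the continuous optimal matching point to the closest inserted event adds at most $\epsilon$ by the triangle inequality, and convexity of the free-space ellipses (as in \Cref{lemma:nfsd,lemma:epsilons}) guarantees that the two incident edges stay feasible after the replacement. Summing over the at most $2k$ vertices keeps the Fr\'echet error at $2\epsilon$. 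Finally I would check that the $S[i][j]$ recurrence minimizes the link count over exactly these curve-restricted, discretized, feasible paths: the feasibility guard enforces that each segment stays in the free space, and the slope-comparison cost (adding $1$ only when the incoming and outgoing unit directions differ) counts Euclidean vertices correctly by \Cref{lemma:negative}; since the $2k$-vertex, $2\epsilon$-error path is one feasible candidate, the DP optimum is no worse.

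The hard part will be making the snapping argument for the $2\epsilon$ bound fully rigorous. One must track how a continuous optimal reparameterization is rounded to a discrete one \emph{without} the two factors of two, one on $k$ from \Cref{lemma:simplify} and one on $\epsilon$ from the disk discretization, compounding into a larger loss. In particular, verifying that the replaced edges remain simultaneously $\alpha\beta$-monotone and within $2\epsilon$, rather than trading error for extra links, is the delicate step, and it is where the convexity of the cell free-space and the interplay between the inserted $\epsilon$-disk events and the shortcut extensions must be used carefully.
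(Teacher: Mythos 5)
Your proposal is correct and takes essentially the same route as the paper: the $2k$ bound comes from \Cref{lemma:simplify} (each optimal plane-vertex is replaced by the two curve points $u,v$), and the $2\epsilon$ bound comes from the same triangle-inequality snapping argument --- the replacement points lie within $\epsilon$ of the optimal simplification, which lies within $\epsilon$ of the input, so the optimal Fr\'echet mapping can be reused at error $2\epsilon$. Your explicit closing step that the dynamic program returns a path no worse than this exhibited candidate is left implicit in the paper, but it is the same argument rather than a different route.
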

\begin{proof}
The vertices of the input curve that replace the vertices of the optimal simplification with any subset of the points in the plane as vertices, as described in \Cref{lemma:simplify}, can be replaced by two points on the curve and with distance at most $\epsilon$ from the simplification (points $u$ and $v$ from \Cref{lemma:simplify}). This is because the free space interval on each edge has distance at most $\epsilon$ from one of the curves.
If a curve simplification using points of the plane with Fr\'echet distance $\epsilon$ (or equivalently, a re-parameterization of the curves with distance $2\epsilon$) exists, each part of the curve that lies inside a disk of radius $\epsilon$ can be covered by the center of that disk (a vertex of the input curve), so it does not increase the Fr\'echet distance. Since the algorithm restricts the points to be on the curves, any point inside the disk can be chosen as part of the output instead of its center, resulting in estimation error $2\epsilon$ which is the sum of the errors at each endpoint of an edge of the optimal simplification. So, the previous Fr\'echet mapping (between the optimal simplification and the input) can be used with Fr\'echet distance $2\epsilon$ for the $(2k)$-simplification, because the points of the $(2k)$-simplification have distance at most $\epsilon$ to the points of the optimal $k$-simplification, and the optimal $k$-simplification has distance at most $\epsilon$ to the input curve.

The complexity of the curve follows from \Cref{lemma:simplify}. This argument in the Euclidean plane is equivalent to that a point on the part of an edge that lies inside a disk might be replaced by the intersection points with that disk, and therefore double the complexity of the computed path.

\end{proof}
\section{$p$-Mean Curve of A Set of Curves}
By substituting the Fr\'echet events of two curves with the Fr\'echet events of $L$ curves, the results of the previous section extend to $L$ curves, for $p\rightarrow \infty$, since the maximum of the distances is considered. For other $p$-mean curves, their distances to the $p$-MC can be different, so the previous methods do not apply.

In this section, we discuss two algorithms for $p$-MC of $L$ curves and analyze their approximation factors.
\subsection{The Pairwise Algorithm for Discrete $p$-Mean Curve}
The $p$-MCs of a set of curves, like simplification using the Fr\'echet distance, is not unique. So, dividing the computation of the $p$-MC with at most $k$ vertices into first computing the Fr\'echet distance of a set of curves, and then simplifying the resulting curve does not yield the optimal solution.

\Cref{alg:pairwise} computes an approximate $p$-MC. In this algorithm, all the Fr\'echet distances between the curves are computed, then, the simplification of the one with the minimum distance to the rest of the curves is reported as an approximate solution.
\begin{algorithm}[h]
\caption{Pairwise Algorithm}
\label{alg:pairwise}
\begin{algorithmic}[1]
\Require{A set of trajectories $\{P_i[1..n]\}_{i=1}^L$}
\Ensure{A $p$-mean trajectory $M$}
\State{$D[i,j]=d_F(P_i,P_j), \forall i=1,\ldots,L,j=1,\ldots,L$}
\State{$M = \arg \min_{i=1}^L \sqrt[p]{\sum_{j=1}^L D[i,j]^p}$}
\State{$M \gets $ min-$\epsilon$ simplification of $M$.}
\end{algorithmic}
\end{algorithm}

\begin{lemma}\label{lemma:aprx1}
\Cref{alg:pairwise} is a $3$-approximation for discrete $p$-MC.
\end{lemma}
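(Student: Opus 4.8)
**

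The plan is to establish the $3$-approximation factor for \Cref{alg:pairwise} via a triangle-inequality argument applied to the $L_p$-norm of Fr\'echet distances. Let $M^*$ denote the true optimal discrete $p$-MC with at most $k$ vertices, achieving cost $\mathrm{OPT} = \sqrt[p]{\sum_{j=1}^L d_F(M^*,P_j)^p}$. Let $P_i$ be the curve selected by the algorithm in line 2, i.e.\ the input curve minimizing $\sqrt[p]{\sum_{j=1}^L d_F(P_i,P_j)^p}$, and let $M$ be its min-$\epsilon$ simplification. The goal is to bound $\sqrt[p]{\sum_{j=1}^L d_F(M,P_j)^p}$ by $3\cdot\mathrm{OPT}$, so I would split the argument into three contributions of $\mathrm{OPT}$ each, where the constant $3$ arises as the sum of three separate approximation steps.

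First I would bound the cost of the selected \emph{unsimplified} curve $P_i$ against $\mathrm{OPT}$. Since $P_i$ minimizes the objective over all input curves, its cost is at most that of \emph{any} fixed input curve; in particular, for each $j$, applying the triangle inequality for Fr\'echet distance gives $d_F(P_i,P_j) \le d_F(P_i,M^*) + d_F(M^*,P_j)$. The key subtlety is that the $L_p$-norm composed with Fr\'echet distance does \emph{not} itself satisfy the triangle inequality (as the excerpt emphasizes), so I must apply the triangle inequality pointwise to each Fr\'echet distance \emph{before} taking the $L_p$ sum, and then invoke Minkowski's inequality (the genuine triangle inequality for the $\ell_p$ norm on the vector of distances) to separate the two terms. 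This yields that the $\ell_p$-norm of the vector $(d_F(P_i,P_j))_j$ is at most $\lVert (d_F(P_i,M^*))_j\rVert_p + \mathrm{OPT}$. To control the first summand I would use the averaging/selection property: because $P_i$ was chosen as the minimizer and $M^*$ has distance-vector whose entries are each $\mathrm{OPT}$-bounded in aggregate, one shows $\lVert (d_F(M^*,P_i))_j\rVert_p$ contributes at most one further $\mathrm{OPT}$, giving cost of $P_i$ at most $2\cdot\mathrm{OPT}$.

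Second, I would account for the simplification step in line 3. The min-$\epsilon$ simplification $M$ of $P_i$ has $d_F(M,P_i)\le \epsilon$, where $\epsilon$ must be chosen so that a $k$-vertex simplification within this Fr\'echet distance exists; since $M^*$ is itself a $k$-vertex curve within distance of $P_i$ bounded by the per-curve contribution, this $\epsilon$ is itself $O(\mathrm{OPT})$. Applying the triangle inequality once more, $d_F(M,P_j)\le d_F(M,P_i)+d_F(P_i,P_j)$ for each $j$, and Minkowski again separates the simplification error $\lVert(\epsilon)_j\rVert_p$ (one more $\mathrm{OPT}$) from the already-bounded cost of $P_i$. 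Combining the three $\mathrm{OPT}$ terms gives the claimed factor $3$.

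The main obstacle I anticipate is the \emph{simultaneous} bookkeeping of the two triangle-inequality applications through the $\ell_p$ norm without accidentally assuming the forbidden triangle inequality for the composite objective. I must be careful to apply the metric triangle inequality coordinate-wise on the distance vectors and reserve Minkowski strictly for the $\ell_p$ aggregation, and to verify that the selection step genuinely caps the $M^*$-to-$P_i$ contribution at $\mathrm{OPT}$ rather than something larger (this is where a pigeonhole/averaging argument over the $L$ curves, or a direct comparison against the minimizer, is essential). A secondary subtlety is ensuring the min-$\epsilon$ simplification indeed attains $k$ vertices at error $O(\mathrm{OPT})$, which relies on $M^*$ witnessing the existence of a low-complexity curve close to $P_i$; I would make this witness explicit to pin down the value of $\epsilon$ used in line 3.
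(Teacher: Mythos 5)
Your overall strategy---coordinate-wise triangle inequality for $d_F$, Minkowski for the $\ell_p$ aggregation, and a split into three $\mathrm{OPT}$ contributions---is the same as the paper's, and your first step is actually sound and more explicit than the paper's on why the selection in line 2 of \Cref{alg:pairwise} works: comparing the medoid against the input curve closest to $M^*$ and using the pigeonhole bound $\min_j d_F(M^*,P_j)\le \mathrm{OPT}/L^{1/p}$ is exactly what kills the $L^{1/p}$ factor on the constant term and yields cost at most $2\,\mathrm{OPT}$ for the unsimplified medoid. The genuine gap is in your second step, and it stems from analyzing a different curve than the paper does. You simplify the \emph{medoid} $P_i$ and want its min-$\epsilon$ $k$-vertex simplification error to be $O(\mathrm{OPT})$ with $M^*$ as witness. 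But in the \emph{discrete} problem the simplification must be a subsequence of the vertices of $P_i$ itself, whereas $M^*$ is a subsequence of the vertices of its own host curve $P_{i^*}$, which in general is not the medoid. So $M^*$ is not a feasible simplification of $P_i$ and witnesses nothing; converting it into a curve-restricted simplification of $P_i$ (the machinery of \Cref{lemma:simplify} and \Cref{theorem:epsilons}) costs a factor $2$ in both the number of vertices and the error, which violates the $k$-vertex requirement of the discrete problem and destroys the constant $3$. The paper avoids this entirely: it takes $P_i$ to be the host curve, so the optimum $O_i$ is by definition a feasible $k$-vertex subsequence of $P_i$, and $d_F(T_i,P_i)\le d_F(O_i,P_i)$ holds for free, with no witness conversion and no pigeonhole needed.

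A second problem in the same step: you claim the simplification error contributes ``one more $\mathrm{OPT}$'' after Minkowski, but that error is a constant across all $L$ coordinates, so its $\ell_p$ contribution is $\lVert(\epsilon,\ldots,\epsilon)\rVert_p = L^{1/p}\epsilon$; establishing $\epsilon=O(\mathrm{OPT})$ therefore gives $L^{1/p}\cdot O(\mathrm{OPT})$, not one more $\mathrm{OPT}$. The pigeonhole that rescued your first step has no analogue here, because the error is attached to the single curve the algorithm already selected rather than to a comparison curve of your choosing. (In fairness, the paper's own write-up is terse on this very point---its terms $\norm{d_F(T_i,O_i)}$ and $\norm{d_F(O_i,P_i)}$ are norms of constant vectors---but your proposal, which explicitly promises to keep Minkowski honest, inherits that difficulty rather than resolving it.) Note finally that you bound the algorithm's actual output (the simplified medoid) while the paper bounds the simplified host curve; your instinct to analyze the real output is reasonable, but it is precisely what forces the unwitnessable simplification step, and as proposed your argument cannot conclude the factor $3$.
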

\begin{proof}
Assume $P_i$ is the curve that has the optimal solution $O_i$ as its simplification and let $T_i$ be an optimal min-$\epsilon$ simplification of $T_i$. Since $T_i$ is an optimal simplification, then
$
d_F(T_i,P_i) \leq d_F(O_i,P_i).
$
Since $p$-MC is also a simplification for $P_i$, its distance to $P_i$ is at least as much as the optimal simplification. Using triangle inequality of norms, the approximation factor is proved:
\begin{align*}
&\norm{d_F(P_j,T_i)}\\
&\quad\leq \norm{d_F(P_j,O_i)+d_F(T_i,O_i)}\\
&\quad\leq \norm{d_F(P_j,O_i)}+\norm{d_F(T_i,O_i)}\\
&\quad\leq \norm{d_F(P_j,O_i)}+\norm{d_F(T_i,P_i)}+\norm{d_F(O_i,P_i)}\\
&\quad\leq \norm{d_F(P_j,O_i)}+2\norm{d_F(O_i,P_i)} \leq 3 OPT.\*
\end{align*}

\end{proof}

\begin{lemma}\label{lemma:pairwise}
The time complexity of \Cref{alg:pairwise} is $$O(L^2 n^2\log n+\G(L,k))$$ for discrete simplification, using a $p$-MC algorithm of time $\G(L,k)$.
\end{lemma}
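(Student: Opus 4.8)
The statement is a direct cost accounting for the three lines of \Cref{alg:pairwise}, so the plan is to bound each line separately and then sum, showing that the distance-matrix term and the simplification term dominate everything else.

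First I would handle line~1, which fills the matrix $D$ with the pairwise Fr\'echet distances $d_F(P_i,P_j)$. There are $O(L^2)$ ordered pairs (in fact $\binom{L}{2}$ distinct unordered ones, since $D$ is symmetric and zero on the diagonal), and for each pair I would compute the exact distance \emph{value}, not merely decide it against a fixed threshold. Deciding $d_F(P_i,P_j)\le \epsilon$ costs $O(n^2)$ via the free-space diagram, as recalled in the preliminaries; extracting the optimal value requires locating the critical threshold among the discrete set of Fr\'echet events (the vertex-vertex and vertex-edge distance values, of which there are $O(n^2)$, each evaluable in $O(1)$ time), which I would do by sorting these candidates and binary-searching with the $O(n^2)$ decision procedure. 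This promotes the $O(n^2)$ decision to an $O(n^2\log n)$ optimization per pair, and multiplying by $O(L^2)$ gives the first term $O(L^2 n^2\log n)$.

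Next I would account for line~2. With $D$ already computed, for each index $i$ the quantity $\sqrt[p]{\sum_{j=1}^L D[i,j]^p}$ is a sum of $L$ already-known $p$-th powers followed by one $p$-th root, computable in $O(L)$ time under the usual unit-cost arithmetic convention, so the $\arg\min$ over all $i$ costs $O(L^2)$. This is dominated by the $O(L^2 n^2\log n)$ term and contributes nothing asymptotically. Finally, line~3 replaces $M$ by its min-$\epsilon$ simplification, whose cost I would charge directly to the assumed subroutine: by the hypothesis of the lemma this is a $p$-MC/simplification algorithm running in time $\G(L,k)$ and producing a $k$-vertex curve. Summing the three contributions and absorbing the dominated $O(L^2)$ term yields the claimed bound $O(L^2 n^2\log n + \G(L,k))$.

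The proof is essentially mechanical, and the only genuinely non-routine point is the $\log n$ factor in line~1. The subtlety there is that one must argue the gap between \emph{deciding} the Fr\'echet distance in $O(n^2)$ and \emph{computing its exact value} costs exactly one logarithmic factor and no more: this follows because the optimum is attained at one of only $O(n^2)$ precomputable critical event values, so a sort plus binary search over the $O(n^2)$-time decision oracle suffices and never incurs a second logarithmic factor. This is the main thing I would take care to state precisely, since the rest of the accounting (lines~2 and~3) is immediate.
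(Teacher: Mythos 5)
Your accounting matches the paper's proof line for line: $\binom{L}{2}$ pairwise distance computations at $O(n^2\log n)$ each, an $O(L^2)$ minimum-finding step that is absorbed by the first term, and a final charge of $\G(L,k)$ for the simplification/$p$-MC subroutine. The one place where you go beyond the paper --- justifying the per-pair $O(n^2\log n)$ bound, which the paper simply cites as the known complexity of computing the Fr\'echet distance --- is exactly where your argument breaks. For the continuous Fr\'echet distance $d_F$ as defined in the paper's preliminaries, the optimum is \emph{not} attained at one of only $O(n^2)$ vertex--vertex and vertex--edge distances: the critical values also include the monotonicity events, at which a monotone passage between two cells of the free-space diagram opens, i.e., the common distance of two vertices of one curve to the intersection of their bisector with an edge of the other curve. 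There are $\Theta(n^3)$ such events, so your sort-plus-binary-search scheme either can return a wrong value (if run on your $O(n^2)$ candidates) or costs $O(n^3\log n)$ per pair (if run on all critical values), which would break the claimed bound. The $O(n^2\log n)$ bound of Alt and Godau is obtained via parametric search precisely to avoid enumerating these events; the correct fix is to invoke that result as a black box, as the paper does.

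If instead you intend the \emph{discrete} Fr\'echet distance (note that the lemma's ``discrete'' refers to the simplification in line 3 of \Cref{alg:pairwise}, not to the distances in line 1), then your candidate set is correct, but the binary search is superfluous: the standard dynamic program computes the discrete Fr\'echet value exactly in $O(n^2)$ time, so no $\log n$ factor is needed per pair at all. Under either reading the bound in the statement remains valid, but the step you single out as ``the only genuinely non-routine point'' is the one step that does not survive scrutiny as written.
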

\begin{proof}
Computing the Fr\'echet distance of two curves takes $O(n^2\log n)$ time. Testing each curve $P_i$ as the center and computing the $L_p$ norm of the Fr\'echet distance of all curves $\{P_i\}_{i=1}^L$ requires $\binom{L}{2}$ distance computations between each pair of curves. This takes $O(L^2 n^2\log n)$ time. Finding the minimum takes $O(L^2)$ time.
Computing the $p$-MC with $k$ vertices takes $\G(L,k)$ time.

\end{proof}
Note that in \Cref{alg:pairwise}, while distances in matrix $D$ satisfy the triangle inequality, their $p$-th power does not. So, approximation algorithms based on triangle inequality cannot be used to prune away large distances in $D$.
\subsection{An Algorithm for $p$-Mean of $L$ Curves}
\Cref{alg:sensitive} simplifies the input curves with error less than their distances to the optimal $p$-MC, then it computes an approximate $p$-MC.

\begin{algorithm}[h]
\caption{$p$-Mean Algorithm}
\label{alg:sensitive}
\begin{algorithmic}[1]
\Require{A set of curves $\{P_i\}_ {i=1}^L$, an integer $k$, an $\alpha$-approx. min-$\epsilon$ simplification algorithm}
\Ensure{An approximate $p$-mean curve $M'$}
\For{$i=1,\ldots,L$}
\State{$P'_i=$ an approximate min-$\epsilon$ simplification of $P_i$.}
\EndFor
\State{$M'=$ a $p$-MC of $P'_1,\ldots,P'_L$.}
\end{algorithmic}
\end{algorithm}

\begin{theorem}\label{lemma:approx}
The approximation factor of \Cref{alg:sensitive} is $2\alpha+1$, if an $\alpha$-approximation simplification algorithm is used.
\end{theorem}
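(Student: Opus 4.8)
The plan is to bound the cost of the output $M'$ against the optimal $p$-MC cost by combining two \emph{separate} triangle inequalities: the Fr\'echet triangle inequality applied componentwise (curve by curve), and Minkowski's inequality---the triangle inequality for $\norm{\cdot}$, valid since $p\ge 1$---applied to the vector of distances. Keeping these two distinct is essential, because (as noted earlier in the paper) the combined $L_p$-norm-of-Fr\'echet objective is itself not a metric, so one cannot apply a single triangle inequality to the whole objective. Let $M^{*}$ denote an optimal $p$-MC of the original curves $\{P_i\}_{i=1}^{L}$ with $k$ vertices, write $\epsilon_i=d_F(M^{*},P_i)$, and let $OPT=\norm{(\epsilon_1,\ldots,\epsilon_L)}$ be its cost.

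First I would establish the per-curve simplification guarantee. Since $M^{*}$ is a $k$-vertex curve with $d_F(M^{*},P_i)=\epsilon_i$, it \emph{witnesses} that the optimal min-$\epsilon$ simplification of $P_i$ into $k$ vertices has error at most $\epsilon_i$. Consequently the $\alpha$-approximate simplification produced in the loop satisfies
\[
d_F(P'_i,P_i)\le \alpha\,\epsilon_i \qquad\text{for every } i=1,\ldots,L.
\]
Next I use that $M'$, being a $p$-MC of the simplified set $\{P'_i\}$, minimizes the cost among all $k$-vertex curves. Comparing $M'$ to the admissible competitor $M^{*}$ and invoking the Fr\'echet triangle inequality $d_F(M^{*},P'_i)\le d_F(M^{*},P_i)+d_F(P_i,P'_i)\le(1+\alpha)\epsilon_i$ gives
\[
\norm{(d_F(M',P'_i))_i}\le \norm{(d_F(M^{*},P'_i))_i}\le(1+\alpha)\,OPT .
\]

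Finally I would transfer this bound back to the original curves. For each $i$, the Fr\'echet triangle inequality yields $d_F(M',P_i)\le d_F(M',P'_i)+d_F(P'_i,P_i)\le d_F(M',P'_i)+\alpha\epsilon_i$; taking $\norm{\cdot}$ of the componentwise sum and applying Minkowski's inequality gives
\[
\norm{(d_F(M',P_i))_i}\le \norm{(d_F(M',P'_i))_i}+\alpha\norm{(\epsilon_i)_i}\le(1+\alpha)\,OPT+\alpha\,OPT=(2\alpha+1)\,OPT,
\]
which is exactly the claimed approximation factor. The hard part will be the simplification-witness step: it requires that $M^{*}$ actually be an admissible input to the min-$\epsilon$ simplification subroutine---i.e. that the subroutine optimizes over $k$-vertex curves whose vertices may be arbitrary points of the plane rather than only subsequences of $P_i$---and that the vertex budget $k$ fed to the simplifier is the same $k$ used to define $M^{*}$. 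I would make this admissibility assumption explicit, and would likewise be careful to apply the Fr\'echet triangle inequality only per curve and Minkowski only to the resulting vector, never to the composite objective.
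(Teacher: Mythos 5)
Your proof is correct, and it uses the same three ingredients as the paper's proof --- the witness bound $d_F(P_i,P'_i)\le\alpha\,d_F(P_i,M^{*})$, the optimality of $M'$ over the simplified set, and triangle inequalities giving $\alpha+(1+\alpha)=2\alpha+1$ --- but your execution of the aggregation step is genuinely different and, in fact, more rigorous than the paper's. The paper runs the whole argument \emph{per curve}: it writes $d_F(P'_i,M')\le d_F(P'_i,M)$ for each individual $i$, justified by ``$M'$ has lower cost than $M$,'' and derives a per-curve ratio bound $d_F(P_i,M')/d_F(P_i,M)\le\sqrt[p]{\alpha^p+(\alpha+1)^p}$. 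But the optimality of $M'$ holds only for the aggregate $L_p$ objective, not coordinatewise, so that per-curve inequality (and hence the paper's slightly tighter intermediate constant $\sqrt[p]{\alpha^p+(\alpha+1)^p}$) is not actually justified as written; the paper's chain also contains an inequality whose direction is dubious for $p\ge 1$. Your version keeps the two triangle inequalities in their proper places --- Fr\'echet per curve, Minkowski on the vector of distances --- and invokes $M'$'s optimality only at the level of $\lVert\cdot\rVert_p$, which is exactly what repairs the gap; the price is that you land on $(2\alpha+1)\,OPT$ directly rather than the tighter per-curve constant, but that is all the theorem claims. Your closing caveat is also well placed: both your proof and the paper's need $M^{*}$ to be an admissible competitor for the min-$\epsilon$ simplification subroutine (vertices anywhere in the plane, same budget $k$), an assumption the paper leaves implicit.
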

\begin{proof}
$M'$ denotes the $p$-mean of curves $\{P_i\}_ {i=1}^L$ computed by the algorithm, and $M$ denotes the optimal solution.
$P'_i$ is a simplification with error equal to the minimum error of simplifications of $P_i$ with at most $k$ vertices, so it has a distance less than any other curve, including $M$:
$
d_F(P_i,M) \geq d_F(P_i,P'_i).
$
Since $M'$ is the $p$-MC with minimum cost for curves $P'_i$, it has a lower cost than $M$. Using triangle inequality of Fr\'echet distance:
\begin{align*}
&d_F(P'_i,M')\\
&\quad\leq d_F(P'_i,M) \leq d_F(P'_i,P_i)+d_F(P_i,M)\\
&\quad\leq \sqrt[p]{d_F(P'_i,P_i)^p+d_F(P'_i,M)^p}\\
&\quad\leq \sqrt[p]{d_F(P'_i,P_i)^p+(d_F(P_i,P'_i)+d_F(P_i,M))^p}
\end{align*}
\begin{align*}
&\frac{d_F(P_i,M')}{d_F(P_i,M)}\\
&\quad\leq \sqrt[p]{(\frac{d_F(P'_i,P_i)}{d_F(P_i,M)})^p+(\frac{d_F(P_i,P'_i)+d_F(P_i,M)}{d_F(P_i,M)})^p} \\
&\quad\leq \sqrt[p]{1+2^p}.
\end{align*}
Substituting the approximation factors for computing $P'_i$ from $P_i$ gives the approximation factor:
$$
\sqrt[p]{\alpha^p+(\alpha+1)^p} \leq 2\alpha+1.
$$

\end{proof}

\begin{theorem}\label{theorem:time2}
\Cref{alg:sensitive} takes $O(L\T(n)+\G(L,k))$ time for continuous $p$-MC, if a $\G(L,n)$ time $p$-MC algorithm on $L$ curves, each with complexity at most $n$, is used.
\end{theorem}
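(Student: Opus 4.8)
The plan is to bound the running time of \Cref{alg:sensitive} by charging each of its two phases separately and then summing the two contributions; there is no recursion or interleaving between the phases, so the analysis is purely additive. First I would handle the simplification loop (lines~1--3). It executes exactly $L$ iterations, and in iteration $i$ it invokes the supplied $\alpha$-approximation min-$\epsilon$ simplification routine on the single curve $P_i$, whose complexity is at most $n$. By the hypothesis that this routine runs in time $\T(n)$ on an $n$-vertex curve, the entire loop costs $O(L\,\T(n))$.

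Next I would analyze the final line, which computes a $p$-MC of the simplified curves $P'_1,\ldots,P'_L$. The crucial observation is that each $P'_i$ is the output of a min-$\epsilon$ simplification run with the fixed vertex budget $k$, so every $P'_i$ has complexity at most $k$. Feeding $L$ curves of complexity at most $k$ into the assumed $\G(L,n)$-time $p$-MC subroutine (instantiated, e.g., by \Cref{theorem:pmc} or \Cref{alg:pairwise}) therefore costs $\G(L,k)$ and not $\G(L,n)$. Adding the cost of the two phases yields the claimed bound $O(L\,\T(n)+\G(L,k))$.

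The one step I would treat most carefully, and which I expect to be the only genuine subtlety, is justifying the substitution of $k$ for the second argument of $\G$: this is precisely the reason the algorithm simplifies \emph{before} combining the curves. It rests on the convention that a min-$\epsilon$ simplification takes the target complexity $k$ as input and returns a curve of complexity at most $k$, so that no curve handed to the $p$-MC subroutine ever exceeds $k$ vertices. Once that invariant is recorded, the remainder is the trivial additive combination of the two phase costs, with no hidden dependence on $L$, $n$, or the ply entering beyond what is already absorbed into $\T$ and $\G$.
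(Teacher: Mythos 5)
Your proposal is correct and follows essentially the same approach as the paper's proof: an additive decomposition charging $O(L\,\T(n))$ for the $L$ simplification calls and $\G(L,k)$ for the final $p$-MC computation. Your explicit justification that the second phase costs $\G(L,k)$ rather than $\G(L,n)$ — because each $P'_i$ respects the vertex budget $k$ — is a detail the paper leaves implicit, but it is the same argument.
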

\begin{proof}
Computing the simplification of a set of curves takes $\T(n)$ time.
The simplification algorithm is used $L$ times in the first step of the algorithm, so the total time complexity of that step is $L\T(n)$.
Computing the $p$-mean of curves $P'_i, i=1,\ldots,L$ takes $\G(L,k)$ time.
So, the running time of the algorithm is $O(L\T(n)+\G(L,k))$.

\end{proof}
\section{Experiments}
In this section, we use two types of trajectory data to evaluate our algorithm. The first one is a set of GPS tracks in different cities, and the second one is a set of pen trajectories while writing characters on a tablet. Our divide and conquer method when used in combination with our simplification algorithm produces results with good approximation factor, faster than existing algorithms that have at least quadratic time complexity due to computing the Fr\'echet distance on the whole input.
\subsection{GPS Trajectory Datasets}
We used two tracks from the datasets of \cite{ahmed2015comparison} of map construction repository~\cite{map}, which are GPS coordinates of trajectories in several cities.
One of them is track 29 of Athens Small dataset with 47 points. The other one is track 82 from Chicago dataset with 363 points.
In this experiment, we only consider the first two coordinates of the tracks and use $\epsilon$ as the simplification error and $\delta$ as the rounding error, and for the simplification algorithm we use \Cref{alg:simp}.

In \Cref{fig:athen}, the original curve and its simplification with $\epsilon=100, \delta=1$ are given. The number of events used is $128$ and the output size is $14$.
\begin{figure}[h]
\centering
\includegraphics[scale=0.5,trim={8.8cm 2cm 8.8cm 2cm},clip]{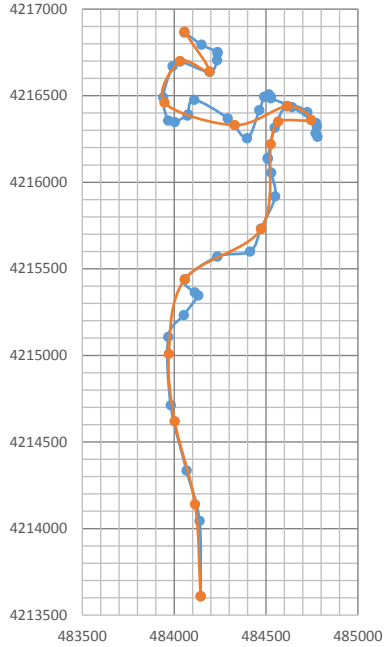}
\caption{Track 29 of Athens Small dataset (in blue) and its simplification (in orange).}
\label{fig:athen}
\end{figure}

Since track 82 of Chicago dataset is too large for the algorithm to compute fast, we break it into chunks of $30$ points, which gives $8$ chunks. Also, we first compute a simplification and then a simplification using the points of the plane on each part of the curve, and concatenate the results. The output size is $121$ points. The parameters are $\epsilon=20$ and $\delta=2$.
\begin{figure*}[htb]
\centering
\includegraphics[scale=0.6,trim={2cm 7cm 2cm 7cm},clip]{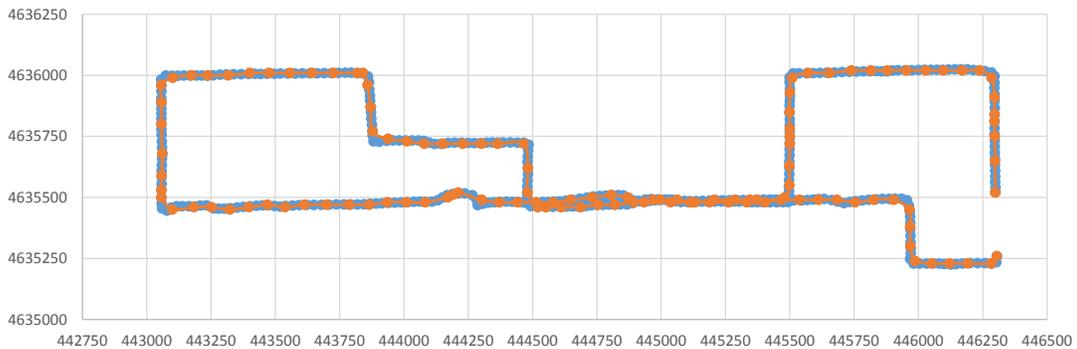}
\caption{Track 82 of Chicago dataset (in blue) and its simplification (in orange).}
\label{fig:chicago}
\end{figure*}
\subsection{Character Trajectories Dataset}
We ran the algorithm on the first trajectory of the dataset ``Character Trajectories Data Set''~\cite{williams2006extracting,williams2007primitive,williams2007modelling} from UCI Machine Learning Repository~\cite{ml}. Data is the pen tip trajectories of characters with a single pen-down which was captured using a WACOM tablet with sampling frequency 200Hz, and the data was normalized and smoothed.
The dataset contains 2858 character samples.

We only considered the first two dimensions x and y, and ignored the last dimension which was the pen-tip force. Since the points are close to each other, the number of events can be high, so, we partition each trajectory into subsets of smaller size, compute their simplification (using the points of the plane), and then attach them and compute their overall estimation.

The first trajectory has 178 points, and the input was partitioned into chunks of $30$ points. In \Cref{fig:char1}, its simplification (using the points of the plane) with error $\epsilon=0.1$ and the rounding error $\delta=0.01$ is shown, where the size of the output is $20$. In \Cref{fig:parts}, the output for $\epsilon=0.05$ and $\delta=0.001$ is shown, and the output size is $35$. Running the algorithm on the concatenation of the estimations of the parts with the same error is shown in \Cref{fig:merged} and it gives an output of size $31$, and the overall error is the sum of the errors, which is $\epsilon=0.1,\delta=0.002$.
\begin{figure}[h]
\begin{subfigure}{.45\textwidth}
\centering
\includegraphics[scale=0.25,trim={7cm 2.5cm 7cm 4cm},clip]{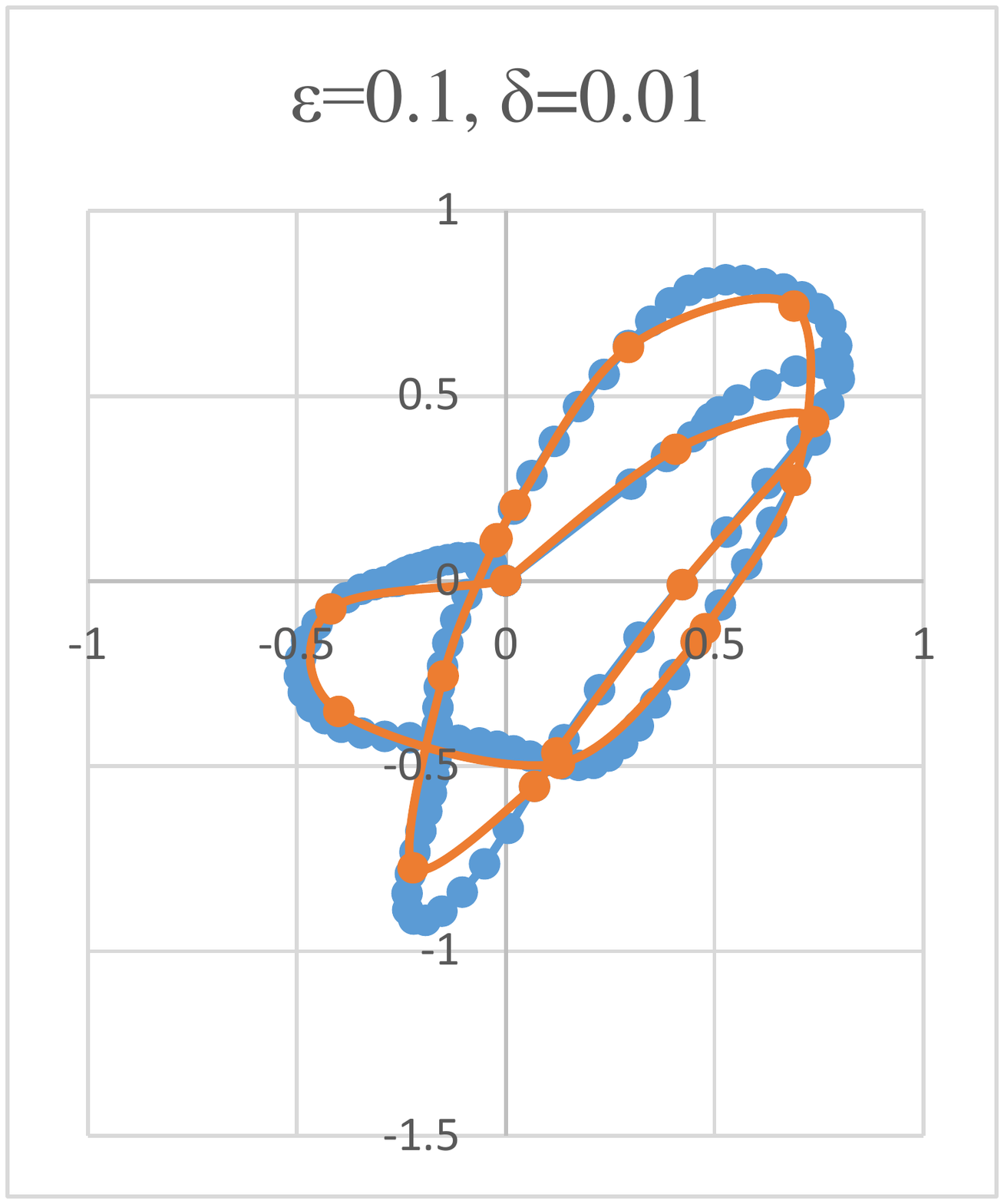}
\caption{The first sample (blue) and its simplification using the points of the plane (orange).}
\label{fig:char1}
\end{subfigure}
\begin{subfigure}{.45\textwidth}
\centering
\includegraphics[scale=0.2,trim={3cm 4cm 3cm 4cm},clip]{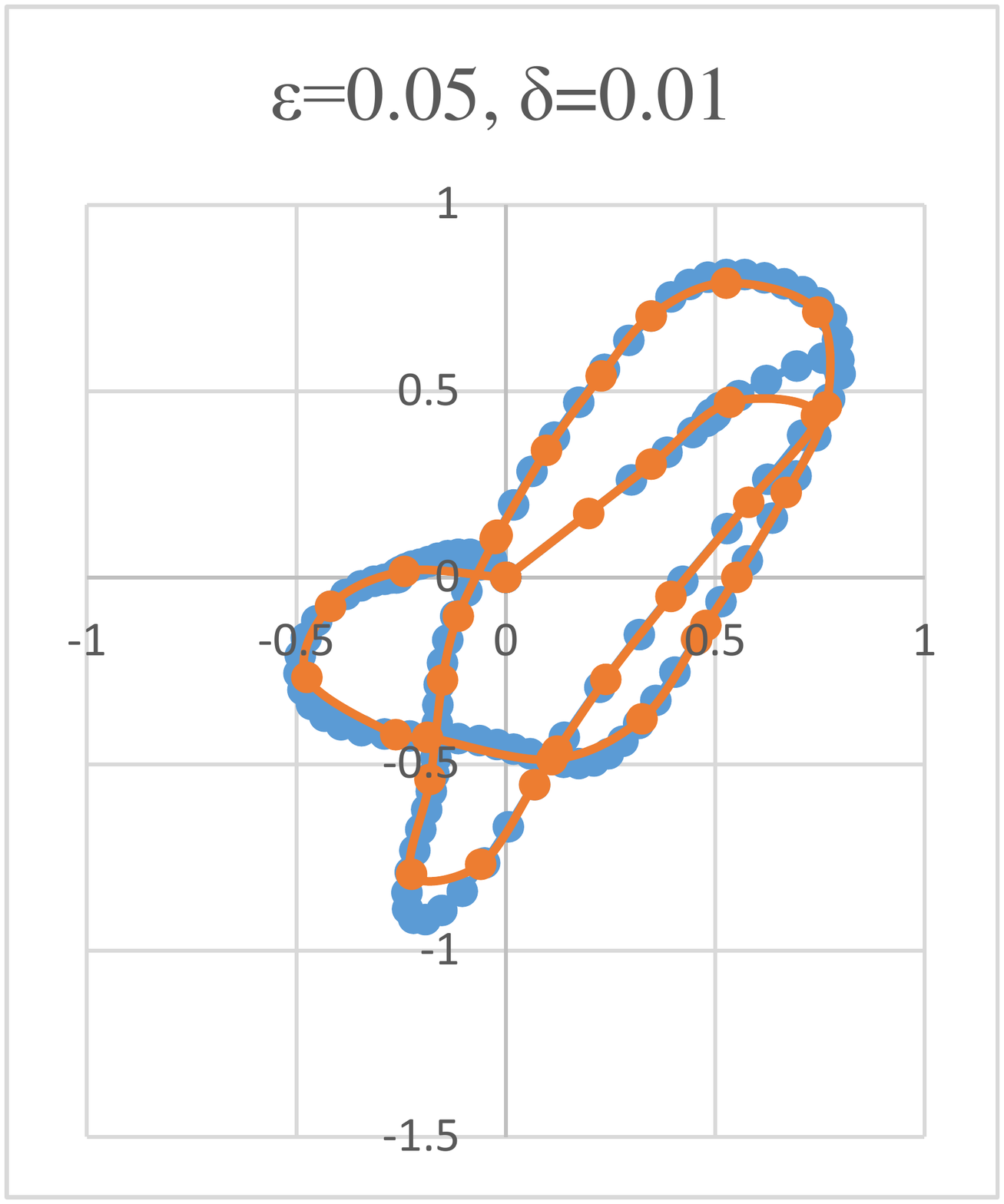}
\caption{The first sample (blue) and its simplification using the points of the plane (orange).}
\label{fig:parts}
\end{subfigure}\hfill
\begin{subfigure}{.45\textwidth}
\centering
\includegraphics[scale=0.2,trim={3cm 4cm 3cm 4cm},clip]{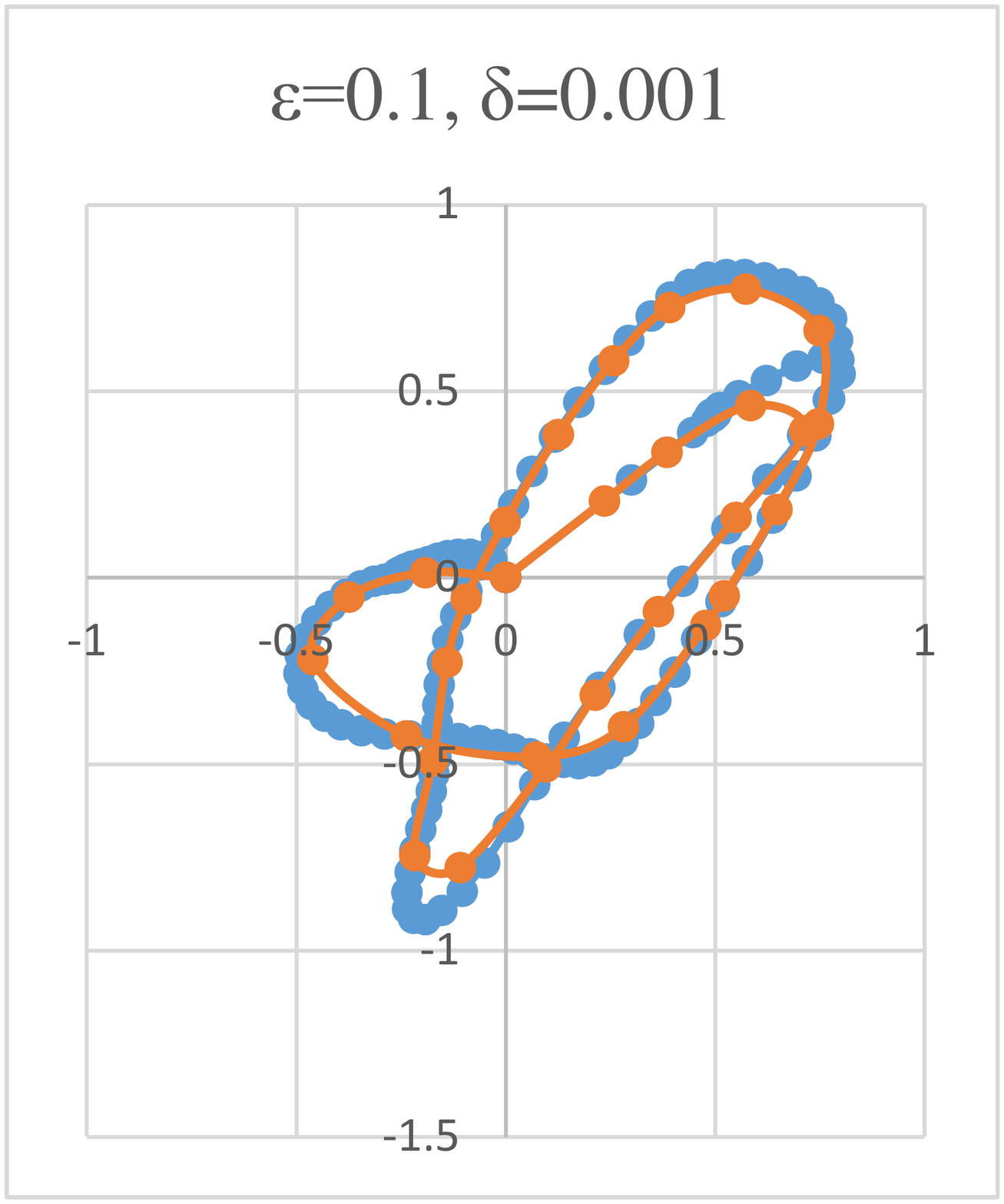}
\caption{The first sample (blue) and its simplification using the points of the plane (orange).}
\label{fig:merged}
\end{subfigure}
\end{figure}
\section{Discussions and Open Problems}
Using the free-space diagram to map $L$ two dimensional points into $L$ curve-length variables removes useful information such as the slope (line inclination). In this paper, we added this information by adding some vertices to the original definition of FSD and used it to give a simplification algorithm. Our algorithm generalizes to $L_p$ norm of the Fr\'echet distances, not to be confused with using $L_p$ norms as the metric space instead of the Euclidean plane.

We also show that the $p$-mean curve satisfies the composable property for core-sets, and results in a constant factor approximation summary.
\section{Theoretical Insights to Existing Heuristics}
\paragraph{Greedy simplification by moving a disk along the curve.}
The heuristic simplification algorithm that sweep the curve and simplifies the part of the curve that is inside the disk of radius $\epsilon$ is a commonly used algorithm in practice, which does not have any theoretical guarantees except for the error $\epsilon$. When discussing this simplification algorithms in the parameter space (FSD) of the curve with itself with $\epsilon$ as input, we see it is in fact the lowermost feasible path. Replacing this path with the shortest path gives an algorithm for min-$k$ simplification with complexity dependent on the size of the FSD for $\epsilon$.
For monotone curves, this is an output-sensitive exact algorithm that takes $O(n\log n+nk)$ time.

\paragraph{Local polyline simplification using Fr\'echet distance}
The results of \cite{optimal} showed the famous Imai-Iri algorithm~\cite{imaiiri} is not optimal for the simplification under Fr\'echet distance, when taken over the whole curves (global Fr\'echet distance).

\bibliographystyle{abbrv}
\bibliography{abbrv,refs}
\end{document}